\theoremstyle{plain}
\newtheorem{thm}{\protect\theoremname}
\theoremstyle{plain}
\newtheorem{prop}[thm]{\protect\propositionname}
\theoremstyle{plain}
\newtheorem{cor}[thm]{\protect\corollaryname}
\theoremstyle{plain}
\newtheorem{lem}[thm]{\protect\lemmaname}
\theoremstyle{definition}
\newtheorem{example}[thm]{\protect\examplename}
\theoremstyle{definition}
\newtheorem{defn}[thm]{\protect\definitionname}
  \providecommand{\corollaryname}{Corollary}
  \providecommand{\examplename}{Example}
  \providecommand{\lemmaname}{Lemma}
  \providecommand{\propositionname}{Proposition}
  \providecommand{\theoremname}{Theorem}
  \providecommand{\definitionname}{Definition}
\newcommand{\Div}{\operatorname{div}}
\newcommand{\Diff}{\operatorname{Diff}}
\newcommand{\Con}{\operatorname{Con}}
\newcommand{\supp}{\operatorname{supp}}
\newcommand{\Tr}{\operatorname{Tr}}
\newcommand{\ceil}[1]{\left\lceil {#1} \right\rceil}
\newcommand{\floor}[1]{\left\lfloor {#1} \right\rfloor}
\begin{document}

%
\title{Explicit Construction of AG Codes from Generalized Hermitian Curves}
%
%
%

\author{Chuangqiang~Hu    
	
\thanks{C. Hu is with the School
	of Mathematics and Computational Science,  Sun Yat-sen University, Guangzhou 510275, P.R.China.\protect\\
	\protect\\
	E-mail: huchq@mail2.sysu.edu.cn}
\thanks{Manuscript received *********; revised ********.}
} 

\maketitle

\begin{abstract}
We present   multi-point algebraic geometric codes overstepping the Gilbert-Varshamov   bound. The construction is based on the  generalized Hermitian curve introduced by A.~Bassa, P.~Beelen, A.~Garcia, and H.~Stichtenoth. These codes are described in detail by constrcting a generator matrix. It 
turns out that these codes have nice properties similar to those of Hermitian codes.  It is shown that the duals are also such codes and an explicit formula is given.    
\end{abstract}

\begin{IEEEkeywords}
Hermitian codes, algebraic geometric codes,  asymptotically good tower, Gilbert-Varshamov   bound.
\end{IEEEkeywords}

%
\IEEEpeerreviewmaketitle

\section{Introduction}
%
%
%
%
\IEEEPARstart{L}{et}   $  F   $ be a function field over a finite field $ \mathbb{F}_l  $. An algebraic geometric code is of the form $ C_{\mathscr{L}}(  D, G) $ with $D=P_{1}+\ldots+P_{n}$ where the $ P _ j $'s are pairwise-distinct places of degree one in $  F $, and $G$ is a divisor of  $  F $ such that $ \supp (G ) \cap \supp (D ) = \emptyset $. And  $ C_{\mathscr{L}}(  D, G) $ is defined by
\[C_{\mathscr{L}}(  D, G):={\left\{ (f(P_{1}), f(P_{2}),\ldots, f(P_{n})) \left| f\in\mathscr{L}(G) \right. \right\} },\]
where $\mathscr{L}(G)$ denotes the Riemann-Roch space associated to $ G $, see \cite{Stichtenoth,Niederreiter} as general references for all facts concerning algebraic geometric (AG) codes.

The Gilbert-Varshamov (GV) bound  \cite{Stichtenoth,Niederreiter}  guarantees the existence of families of codes over the finite field  with good asymptotic parameters; i.e., information rate and relative minimum distance. It is well known that   the parameters of   AG codes related to   asymptotically good towers of function fields are better than the GV bound in a certain range of the rate \cite{Tsfasman1982,Niederreiter2000}.

 Denote by $ N(F) $ the number of rational places of $ F/\mathbb{F}_l $.  Let $ N_l(g):= \max \{ N(F) | \text{$ F $ is a function field over $ \mathbb{F}_l  $ of genus $ g $}\} $. The real number
 \[ A(l):= \limsup_{g\to \infty} \frac{N_l(g)}{g} ,\]
 is called Ihara's quantity. The Drinfeld-Vladut bound \cite{Drinfeld} tells  us that  
 \begin{equation} \label{eq:Ihara}
 A(l)\leqslant \sqrt l-1 .
 \end{equation}
 If $ l $ is a square; then
\begin{equation}\label{eq:square}
   A(l)=\sqrt l-1 ,
\end{equation}
  which was first shown by Ihara \cite{Ihara}. Tsfasman, Vladut and Zink gave in \cite{Tsfasman} an independent proof
  of Equation (\ref{eq:square}). 
  For $ l  $ is a square, and $ l \geqslant 49$, the GV bound was improved by the famous  Tsfasman-Vladut-Zink theorem \cite{Tsfasman}.
    Also, for $ l  = q^c $ with odd $ c > 1  $ and very large $ q $, there are improvements
   of the GV bound due to Niederreiter and Xing \cite{Niederreiter}.

  For
  applications to coding theory though, explicit construction of good towers  are needed. In \cite{Garcia}  
  A.~Garcia, and H.~Stichtenoth gave an explicit  construction of a tower of Artin-Schreier extensions of function fields over $ \mathbb{F}_{q^2} $ attaining the Drinfeld-Vladut bound.
Recently,  in \cite{Bassa}, A.~Bassa, P.~Beelen, A.~Garcia, and H.~Stichtenoth    produced an explicit tower of
function fields over finite fields $ \mathbb{F}_{q^{2b+1}} $ for any   integer $ b \geqslant 1 $ and
showed that this tower gives
\begin{equation}\label{eq:Ihara2}
 A(q^{2b+1})  \geqslant \frac{2 (q^{b+1}-1)}{q + 1 + \epsilon}
\quad  \text{with }
\epsilon =  \frac{q-1}{q^b -1 }.
\end{equation}
Using this tower they obtained an improvement of  the GV bound 
 for all non-prime fields $ \mathbb{F}_l $ with $ l \geqslant 49 $, except possibly $ l= 125 $ in \cite{Bassa2014}.

Their construction can be restated as follows. 
Let $ \mathbb{F}_l $ be a non-prime field and write $ l =q^c $ with $ c \geqslant 2 $. Here the integer $  c $ can be even or odd. For every partition of $ c $ in relatively prime parts; i.e., 
 \[ c= a+b \quad \text{ with $a\geqslant 1, b\geqslant 1 $} , \]
   where the greatest common
    factor of $ a $ and $ b $ is $\gcd (a,b)=1$, we define 
 \begin{align*}
 H(x,y)& :=\frac{y^{q^{a}}}{x}+\frac{y^{q^{a+1}}}{x^{q}}+\ldots+\frac{y^{c-1}}{x^{q^{b-1}}}   \\ 
 &\quad +\frac{y }{x^{q^{b}}}+\frac{y^{q}}{x^{q^{b+1}}}+\ldots+\frac{y^{q^{a-1}}}{x^{q^{c-1}}}  .
 \end{align*}
The asymptotically good  tower $ \mathcal{F} =(F^{(1)} \subseteq F^{(2)} \subseteq F^{(3)} \subseteq \ldots ) $ over $ \mathbb{F}_l $   is recursively given by the equation
\begin{equation}\label{eq:the_plane_curve}
 H(x,y)=1 .
\end{equation}
Precisely speaking,
\begin{enumerate}
\item $ F^{(1)} = \mathbb{F}_l (x_1)  $ is the rational function field, and
\item $ F^{(i+1)} =  F^{(i )} (x_{i+1}) $ with $ H(x_i, x_{i+1})=1 $, for all $i \geqslant 1$.
\end{enumerate}
For the case $  a=b+1 $, this tower implies Equation (\ref{eq:Ihara2}). For $a = b =1 $, this tower is identical with the one constructed in \cite{Garcia} and the second function field $ F^{(2)} $ is the Hermitian function field \cite{Stichtenoth,Tiersma}. To see this, we restate Equation (\ref{eq:the_plane_curve}) as follows
\[ \frac{y^{q }}{x} + \frac{y }{x^{q }} =1.
 \]
We replace $ xy $ by $ z $; then
\begin{equation}
z+z^q=x^{q+1},
\end{equation}
which is exactly the canonical definition of Hermitian curve.
So  the second function field $ F^{(2)} $ with general coefficients $a, b$ can be regarded as the 
generalized Hermitian function field. And then   the generalized Hermitian curve can be defined as follows.
\begin{defn}
   The \textbf{generalized Hermitian curve} over $ \mathbb{F}_{q^c}  $ with coprime integers $a, b\in \mathbb{Z}^{+}$ verifying $a+b=c$, is  defined by the affine equation
    \begin{align*}
    \frac{y^{q^{a}}}{x}+\frac{y^{q^{a+1}}}{x^{q}}+\ldots+\frac{y^{c-1}}{x^{q^{b-1}}}   +\frac{y }{x^{q^{b}}}+\frac{y^{q}}{x^{q^{b+1}}}+\ldots+\frac{y^{q^{a-1}}}{x^{q^{c-1}}} =1 .
    \end{align*}
\end{defn}

 The AG codes arising from the Hermitian curve are widely investigated, which are called Hermitian codes. The advantage
 of these codes is that these codes are easy to describe and to encode and decode. Moreover, these codes often have excellent parameters.
 
  One-point codes from Hermitian curves were well-studied in the literature, and efficient methods to decode them were known  \cite{Stichtenoth,Guruswami,Yang,Yang2}. The minimum distance of Hermitian two-point codes had been first determined by M.~Homma and S.~J.~Kim  \cite{Homma,Homma2,Homma3,Homma4}. The explicit formulas for the dual minimum distance of such codes were given by S.~Park  \cite{Park}. Recently, Hermitian codes from higher-degree places had been considered in \cite{Korchmaros}. The dual minimum distance of many three-point codes from Hermitian curves was computed in \cite{Ballico}, by extending a recent and powerful approach by A.~Couvreur \cite{Couvreur}.  H.~Maharaj, G.~L.~Matthews and G.~Pirsic determined explicit bases for large classes of Riemann-Roch spaces of the Hermitian function field \cite{Maharaj}. These bases gave better estimates on the parameters of a large class of multi-point Hermitian codes. 
 
In \cite{Hu}, the authors  explicitly constructed multi-point codes from the generalized Hermitian curves with coefficients $a=1$, and $b=c-1$.

In this paper we investigate multi-point codes from the   generalized Hermitian curves $ \mathcal{X}$ with $ a = b+1 $. The advantage in this setting is that the related tower  achieve the bound stated in Equation (\ref{eq:Ihara2}) as indicated in \cite{Bassa}.
 We introduce four important divisors as follows according to  \cite{Bassa},
\begin{enumerate}
		\item $D: = \sum_{\alpha, \beta } {D_{\alpha, \beta }} $, where $D_{\alpha,\beta}:=(x=\alpha, y=\beta)$ with $ \alpha , 
		\beta \in \mathbb{F}_ {q^c}^{*}$ satisfying $ H(\alpha, \beta)=1 $;
		\item $P:=(x=0,y=0)$, which would be split into two parts, namely $ P= P_1 + P_0 $ where $P_1: =(x=0, y=0, x^{-q^b} y = a^{-1})$ denotes a rational place, if $ a $ and $q $ are coprime;
		\item $Q:=(x=\infty, y= \infty )$; 
		\item $V:=(x=0, y=\infty)$.
\end{enumerate}
  The divisors $ D $, $ P $, $ Q $ and $ V $ contain all the possible rational places on the curve. We define the algebraic geometric codes over   $ \mathbb{F}_{q^c}  $
	 \[
	C_{v,r,s,t}=C_{\mathscr{L}}(  D, v P_1 + rP_0 + sQ +tV ) .  \] 

For applications of such codes in practice one needs an explicit description, which means an explicit basis for the vector space   $ \mathscr L(vP_1 + rP_0 +s Q+t V) $ or a generator matrix of the code $ C_{v,r,s,t} $. We discover that this problem is related to a point-counting problem. Pick's theorem \cite{Haigh,Varberg} provides a simple formula for calculating for two-dimensional lattice point set. Let $ \Omega $ be a lattice polygon. Assume there are $ I $ lattice points in the interior of $ \Omega $, and $ M $ lattice points on its boundary. Let $ S $ denote the area of $ \Omega $. Then
    \[ 
    S = I + \frac{M}{2} - 1. 
    \] 
	 The main technical part in this paper is to solve the related three-dimensional point-counting problem using Pick's theorem. So we can describe the code $ C_{v,r,s,t} $ by constructing a generaor matrix. As in the Hermitian case, it turns out that the dual code of   $ C_{v,r,s,t} $  is of the same type. Finally, it is shown that  the Goppa bound of $ C_{v,r,s,t} $ improves the GV bound in a certain interval.
	 For example, we find a $ [496,250,\geqslant 172]$-code over $ \mathbb{F}_{32} $ overstepping the GV bound.
	 
	 The paper is organized as follows. In Section 2, we introduce some arithmetic properties of the curve $ \mathcal{X} $ and describe all the rational places. In Section 3, we construct a basis for the Riemann-Roch space $ \mathscr{L}(vP_1 + r P_0 + sQ +tV ) $. Section 4 is devoted to investigating the parameters and the duality properties of $ C_{v,r,s,t} $.

\section{the arithmetic properties of the curve}
 We start with some notations according to \cite{Bassa}. Let $ q  $ be a power of a prime $p $ and $ \mathbb{F}_{q^{c}} $ be a finite field of cardinality $ q^{c}  $. For an integer $ a \geqslant 1 $, we define the   function 
\[ \Tr_a(x):= x+ x^{q^1}+ x^{q^2} +\ldots+ x^{q^{a-1}}.
 \]
 We assume $c$ is an odd number and fix a partition of $ c $ into two consecutive integers; i.e., we write
 \[
 c=a+b, \text{ with  $ a=b+1 , b \in \mathbb{Z}^+$}. 
  \]
  
In this section we study the generalized Hermitian curve $ \mathcal{X} $ over $\mathbb{F}_{q^{c}} $
\begin{equation}\label{eq:curve_x_y}
\Tr_{b}(\frac{y^{q^ a}}{x})+\Tr_{a}(\frac{y}{x^{q^b}})=1.
\end{equation} 
For abbreviation we set $N_k:=( q^k-1)/(q-1)$ for every integer $ k>1 $.  For an element $f$ in the function field $\mathbb{F}_{q^c } (\mathcal X) $ of $\mathcal X $, define 
\[ 
\Div(f), \Div_0(f),\text{ and}   \Div_\infty(f)  
\]
the principal divisor, zero divisor and the pole divisor of $f$ in $\mathbb{F}_{q^c } (\mathcal X) $. Let $P:=(x=0,y=0)$, $Q:=(x=\infty,y=\infty)$, and $V:=(x=0,y=\infty)$
be the divisors of   $\mathbb{F}_{q^{c}}(\mathcal{X})$. For a divisor $D$ in   $\mathbb{F}_{q^c }$, we denote by $\deg (D)$ the degree of $D$. Several results in \cite{Bassa} are restated by the following proposition.
\begin{prop}[\cite{Bassa}]
	\label{prop:valuation}
	\begin{enumerate}
		\item	The curve $ \mathcal{X} $ has genus \[ g=\frac{1}{2}\left((q^{c}-2)(q^{a-1}+q^{b-1}-2)+(q^{c}-q)\right) .\]
		\item $\Div( x)=P+q^{a-1} N_{b} V-q^{a}Q$, and $\Div( y)=q^{b}P-q^{b-1} N_{a}V-Q$.

		\item $\deg(P)=q^{a-1}$,  $\deg(Q)=q^{b-1}$, and $ \deg(V)=q-1$.
		
			\item For each $\alpha \in \mathbb{F}_{q^c}^*$, there are $q^{c-1}$ elements $\beta \in \mathbb{F}_{q^c}^*$ such that $ \Tr_c(\frac{\beta}{\alpha^{q^b} })=1$, and for such pairs $(\alpha, \beta)$ there is a unique place $ D_{\alpha, \beta} $ of degree one with 
			\[ x\equiv \alpha \mod{P_{\alpha,\beta}} \text{ and } y\equiv \beta \mod{P_{\alpha,\beta}}.
			\]
			Let $ D: = \sum D_{\alpha, \beta} $. Then $ \deg(D) = q^{c-1} (q^{c}-1) $.
	\end{enumerate}
\end{prop}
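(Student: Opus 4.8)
The plan is to prove the four assertions separately, working inside the extension $\mathbb{F}_{q^c}(\mathcal X)/\mathbb{F}_{q^c}(x)$ and leaning on the ramification analysis of the tower in \cite{Bassa} for the parts that require it. First I would record the degree of the extension: clearing denominators in \eqref{eq:curve_x_y} shows that $y$ satisfies over $\mathbb{F}_{q^c}(x)$ a polynomial whose $y$-degree equals $q^{c-1}$ (the top monomial is $y^{q^{c-1}}/x^{q^{b-1}}$ and the largest denominator is $x^{q^{c-1}}$), so $[\mathbb{F}_{q^c}(\mathcal X):\mathbb{F}_{q^c}(x)]=q^{c-1}$. The places $P,Q,V$ are exactly those lying over $x=0$ and $x=\infty$: over $x=0$ the two local behaviours $y=0$ and $y=\infty$ produce $P$ and $V$, while over $x=\infty$ one finds $Q$. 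For (2) and (3) I would read off the multiplicities $v(x)$ and $v(y)$ at each of these places by balancing the dominant monomials of \eqref{eq:curve_x_y} locally; this pins down the coefficients $1,\ q^{a-1}N_b,\ -q^a$ of $P,V,Q$ in $\Div(x)$ and $q^b,\ -q^{b-1}N_a,\ -1$ in $\Div(y)$. Since $x$ and $y$ have no other zeros or poles, imposing $\deg\Div_0(x)=\deg\Div_\infty(x)$ (and likewise for $y$) then forces the residue degrees and yields $\deg P=q^{a-1}$, $\deg Q=q^{b-1}$, $\deg V=q-1$; one checks these are consistent, both sides equalling $q^{c-1}$.

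With this ramification data in hand, (1) follows from the Hurwitz genus formula applied to $\mathbb{F}_{q^c}(\mathcal X)/\mathbb{F}_{q^c}(x)$: here $g_x=0$, so $2g-2=-2q^{c-1}+\deg\mathfrak d$, where the different $\mathfrak d$ is supported on the places over $x=0,\infty$. Summing the local different exponents and simplifying reproduces the displayed closed form. Because these contributions are computed in \cite{Bassa} for the full tower, I would cite that analysis and merely verify that the specialization $a=b+1$, $c=a+b$ gives the stated expression rather than reprove the tower's ramification theory.

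The substantive point is (4), which I prove directly. For $\alpha,\beta\in\mathbb{F}_{q^c}$ there is the identity $H(\alpha,\beta)=\Tr_c(\beta/\alpha^{q^b})$: writing $\Tr_c(\beta/\alpha^{q^b})=\sum_{k=0}^{c-1}\beta^{q^k}/\alpha^{q^{b+k}}$ and splitting the range into $0\le k\le a-1$ and $a\le k\le c-1$, the relation $\alpha^{q^c}=\alpha$ turns the two blocks into $\Tr_a(\beta/\alpha^{q^b})$ and $\Tr_b(\beta^{q^a}/\alpha)$, whose sum is the left side of \eqref{eq:curve_x_y}. Hence $H(\alpha,\beta)=1$ is equivalent to $\Tr_c(\beta/\alpha^{q^b})=1$. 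Since $\Tr_c=\Tr_{\mathbb{F}_{q^c}/\mathbb{F}_q}$ is a surjective $\mathbb{F}_q$-linear form, for each fixed $\alpha\in\mathbb{F}_{q^c}^*$ its level set at $1$ is a coset of the kernel, of cardinality $q^{c-1}$, and $\beta=0$ is automatically excluded because $\Tr_c(0)=0\ne1$. Finally, specializing $x=\alpha$ in the cleared form of \eqref{eq:curve_x_y} gives a polynomial in $y$ of degree $q^{c-1}=[\mathbb{F}_{q^c}(\mathcal X):\mathbb{F}_{q^c}(x)]$; since it already has these $q^{c-1}$ distinct roots $\beta$ in $\mathbb{F}_{q^c}$, all its roots are simple and rational, so $x=\alpha$ splits completely and each admissible $\beta$ is the centre of a unique degree-one place $D_{\alpha,\beta}$. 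Summing over the $q^c-1$ choices of $\alpha$ gives $\deg D=(q^c-1)\,q^{c-1}$.

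The main obstacle is the different/ramification bookkeeping underlying (1)–(3): correctly deciding which of $P,Q,V$ are wildly ramified and evaluating the exact local different exponents in the wild case. This is precisely where \cite{Bassa} is indispensable, so I would invoke it there; by contrast (4) is elementary once the trace identity above is isolated, and the complete-splitting argument makes the existence and uniqueness of each $D_{\alpha,\beta}$ immediate without a separate smoothness check.
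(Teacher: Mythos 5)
The paper gives no proof of this proposition at all: it is attributed to \cite{Bassa} and the surrounding text says only that several results of \cite{Bassa} are being restated. Your proposal is therefore more, not less, than what the paper provides, and it is correct in outline. For (1)--(3) you defer to \cite{Bassa} for the same essential inputs the paper implicitly relies on (irreducibility of the defining polynomial, hence $[\mathbb{F}_{q^c}(\mathcal X):\mathbb{F}_{q^c}(x)]=q^{c-1}$, the local ramification indices at $x=0,\infty$, and the wild different exponents), so there your route coincides in substance with the paper's; one small caveat is that the two degree balances $\deg\Div_0(x)=\deg\Div_\infty(x)$ and $\deg\Div_0(y)=\deg\Div_\infty(y)$ are only two linear relations in the three unknowns $\deg P,\deg Q,\deg V$, so you must also use $\deg\Div_\infty(x)=q^{c-1}$ (i.e., that $Q$ exhausts the fibre over $x=\infty$) --- your ``consistency check'' is actually a needed third equation, after which the system does yield $q^{a-1},q^{b-1},q-1$. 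The genuinely new content is (4): the identity $H(\alpha,\beta)=\Tr_c(\beta/\alpha^{q^b})$, obtained by splitting the $c$ conjugates into the ranges $0\leqslant k\leqslant a-1$ and $a\leqslant k\leqslant c-1$ and using $\alpha^{q^c}=\alpha$, checks out (and is tacitly required by the paper to reconcile its definition of $D$ via $H(\alpha,\beta)=1$ with the trace condition in the statement); the count $q^{c-1}$ from surjectivity of the $\mathbb{F}_q$-linear form, and the complete-splitting argument via Kummer's theorem (a degree-$q^{c-1}$ specialization with $q^{c-1}$ distinct rational roots must split into distinct linear factors) are both sound and make the existence and uniqueness of each degree-one place $D_{\alpha,\beta}$ immediate. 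In short: your argument is correct and buys a self-contained verification of part (4), where the paper offers only a citation.
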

 
The following proposition describes all the rational places on the curve $\mathcal{X}$.

\begin{prop}\label{prop:places}
	\begin{enumerate}
		\item There exists a unique rational place $ P_1 := (x=0, y=0, x^{-q^b} y = a^{-1})$ in the divisor $ P $ if and only if $ p \nmid  a$.
		\item There exists a unique rational place $ Q_1 := (x=\infty, y=\infty, x^{-1} y^{q^a} = b^{-1})$ in the divisor $ Q $ if and only if $ p \nmid  b$.
	 \item The divisor $ V $ can be written as  $ V = \sum _{ \mu^{q-1}=-1 } V_\mu  $, where $ V _ \mu = ( x= 0 ,y=\infty ,x^{q^{b -1 } N_{a}}y^{q^{a -1 }N_{b}} = \mu )$ denotes a rational place  in $ V $ just in case $ p =2$.
	\end{enumerate}
	Therefore, all the possible  rational places on the curve are the following: $D_{\alpha,\beta}$,
		$P_{1}$, $Q_{1}$, and $V_{\mu}$.
\end{prop}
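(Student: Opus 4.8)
The plan is to determine, for each of the three ``boundary'' divisors $P$, $Q$, $V$, exactly which of the places in its support are rational, and then to observe that every rational place of $\mathcal{X}$ lies over $x=0$, over $x=\infty$, or over a finite nonzero value of $x$, so that the four families $D_{\alpha,\beta}$, $P_1$, $Q_1$, $V_\mu$ exhaust them all. The mechanism is the same in each case: attach to the divisor a monomial in $x$ and $y$ (a ``slope'') which the principal divisors of Proposition~\ref{prop:valuation} reveal to be a \emph{unit} along that divisor, substitute it into the defining equation~(\ref{eq:curve_x_y}), and read off the one-variable equation satisfied by the residue. The rational places then correspond precisely to the solutions of that equation lying in $\mathbb{F}_{q^c}$.

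First I treat $P$. Put $w:=x^{-q^b}y$. Using $\Div(x)$ and $\Div(y)$ one computes $\Div(w)=\Div(y)-q^b\Div(x)$, whose support meets only $V$ and $Q$; in particular $w$ is a unit at every place in $P$. Substituting $y=wx^{q^b}$ turns~(\ref{eq:curve_x_y}) into $\Tr_a(w)+\sum_{i=0}^{b-1}w^{q^{a+i}}x^{q^i(q^c-1)}=1$, so reducing modulo the places of $P$ (where $x\to0$) gives $\Tr_a(\bar w)=1$. Since $\Tr_a$ is a separable additive polynomial of degree $q^{a-1}=\deg P$ with $w$-derivative equal to $1$, a Hensel/simple-root argument matches its $q^{a-1}$ roots bijectively with the geometric points of $P$; hence the rational places in $P$ are exactly the roots of $\Tr_a(w)=1$ lying in $\mathbb{F}_{q^c}$. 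The divisor $Q$ is handled identically with the unit $u:=x^{-1}y^{q^a}$ (here $\Div(u)=q^a\Div(y)-\Div(x)$ avoids $Q$), which reduces~(\ref{eq:curve_x_y}) to $\Tr_b(\bar u)=1$, a separable additive polynomial of degree $q^{b-1}=\deg Q$.

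The arithmetic heart is to count the $\mathbb{F}_{q^c}$-solutions of $\Tr_a(w)=1$. Writing $\sigma$ for the Frobenius $t\mapsto t^q$, one has $\Tr_a=p_a(\sigma)$ with $p_a(t)=1+t+\cdots+t^{a-1}=(t^a-1)/(t-1)$, and via a normal basis $\mathbb{F}_{q^c}\cong\mathbb{F}_q[t]/(t^c-1)$ as $\mathbb{F}_q[\sigma]$-modules. The kernel of $p_a(\sigma)$ then has $\mathbb{F}_q$-dimension $\deg\gcd(t^c-1,p_a(t))$; since $\gcd(a,c)=\gcd(a,b)=1$ gives $\gcd(t^c-1,t^a-1)=t-1$, this gcd is $1$ when $p\nmid a$ and $t-1$ when $p\mid a$ (the criterion being $p_a(1)=a$). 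When $p\nmid a$ the map $\Tr_a$ is therefore a bijection on $\mathbb{F}_{q^c}$, so $\Tr_a(w)=1$ has the unique solution $w=a^{-1}$ (indeed $\Tr_a(a^{-1})=a\cdot a^{-1}=1$), giving the single rational place $P_1$. When $p\mid a$, note that $p\mid a$ forces $p\nmid c$ (else $p\mid b$, contradicting $\gcd(a,b)=1$); evaluating at $t=1$ then shows the constant $1$ is not in the image of $p_a(\sigma)$, since that image lies in $(t-1)R$, on which evaluation at $t=1$ vanishes, whereas $1\mapsto c\not\equiv0$. Thus there is no rational place in $P$, and the identical computation with $b$ in place of $a$ yields part~2.

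Finally I treat $V$. Set $m:=x^{q^{b-1}N_a}y^{q^{a-1}N_b}$; combining $\Div(x)$ and $\Div(y)$ shows the coefficient of $V$ in $\Div(m)$ vanishes, so $m$ is a unit along $V$. Computing valuations at a place of $V$ (where $x\to0$, $y\to\infty$, with $v(m)=0$ fixing the slope $v(y)/v(x)$) singles out $T_{b-1}=y^{q^{c-1}}/x^{q^{b-1}}$ and $S_{a-1}=y^{q^{a-1}}/x^{q^{c-1}}$ as the two terms of highest (and equal) pole order, and a direct exponent check gives the key identity $T_{b-1}/S_{a-1}=m^{q-1}$. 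Dividing~(\ref{eq:curve_x_y}) by $S_{a-1}$ and passing to residues therefore yields $\bar m^{q-1}+1=0$, a separable equation with $q-1=\deg V$ roots, which biject with the geometric points of $V$. Such a root $\mu$ lies in $\mathbb{F}_{q^c}$ iff $p=2$: in characteristic $2$ the equation reads $\mu^{q-1}=1$, so $\mu\in\mathbb{F}_q^{*}\subset\mathbb{F}_{q^c}$ and all $q-1$ places $V_\mu$ are rational, whereas for odd $p$ the order of any such $\mu$ has $2$-adic valuation $v_2(q-1)+1>v_2(q^c-1)$ (using that $c$ is odd), so $\mu\notin\mathbb{F}_{q^c}$. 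Assembling the three analyses with the observation that the poles of $x$ and of $y$ lie only over $x=\infty$, so that $D$ collects all places over finite nonzero $x$ while $P,V$ and $Q$ collect those over $x=0$ and $x=\infty$, completes the proof. I expect the delicate step to be the reduction at $V$ — isolating the two tied polar terms and verifying $T_{b-1}/S_{a-1}=m^{q-1}$ — together with the solvability bookkeeping (the $\gcd$ computation and the $2$-adic valuation comparison) that pins down exactly when the residues are $\mathbb{F}_{q^c}$-rational.
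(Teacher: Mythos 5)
Your argument is correct, and its skeleton is the same as the paper's: for each of $P$, $Q$, $V$ you isolate the same unit ($x^{-q^b}y$, $x^{-1}y^{q^a}$, and $x^{q^{b-1}N_a}y^{q^{a-1}N_b}$ respectively), reduce the defining equation modulo the places in question to obtain $\Tr_a(\bar{w})=1$, $\Tr_b(\bar{u})=1$, and $\bar{m}^{q-1}=-1$, and then decide solvability over $\mathbb{F}_{q^c}$; in particular your division by $S_{a-1}$ together with the identity $T_{b-1}/S_{a-1}=m^{q-1}$ is exactly the paper's multiplication of the equation by $x^{q^{c-1}}y^{-q^{a-1}}$. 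The genuine differences lie in how two sub-steps are discharged. Where you count the $\mathbb{F}_{q^c}$-solutions of $\Tr_a(w)=1$ by identifying $\mathbb{F}_{q^c}$ with $\mathbb{F}_q[t]/(t^c-1)$ via a normal basis and computing $\gcd\bigl(t^c-1,(t^a-1)/(t-1)\bigr)$, the paper uses a shorter trick: raising $\Tr_a(\gamma)=1$ to the $q$-th power and subtracting forces $\gamma^{q^a}=\gamma$, hence $\gamma\in\mathbb{F}_{q^a}\cap\mathbb{F}_{q^c}=\mathbb{F}_q$ since $\gcd(a,c)=1$, and the equation collapses to $a\gamma=1$; this gives existence, uniqueness and the criterion $p\nmid a$ in one stroke, while your module-theoretic count needs the extra observation (which you supply) that $p\mid a$ forces $p\nmid c$ so that $1$ misses the image. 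Where you justify the bijection between the geometric points of $P$ and the roots of the separable polynomial $\Tr_a(T)-1$ by a Hensel/simple-root argument, the paper simply quotes the conorm decomposition of $P$ from Bassa et al.; and your $2$-adic comparison $v_2(q-1)+1>v_2(q^c-1)$ for odd $c$ makes explicit what the paper only asserts, namely that $\mu^{q-1}=-1$ is solvable in $\mathbb{F}_{q^c}$ exactly when $q$ is even. In short, same route, with two sub-arguments replaced by heavier but self-contained ones.
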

\begin{proof}
	\begin{enumerate}
		\item Suppose that $ P_\gamma $ is 	a rational place in $ P $.  It can be deduced by the divisors of $x$ and $y$ that $ v_{P_\gamma}(x^{-q^b}y)=0 $ and $ v_{P_\gamma}(x^{-1}y^{q^a})> 0 $. We can assume that $ x^{-q^b}y \equiv \gamma \mod P_\gamma $. Taking evaluation  in Equation (\ref{eq:curve_x_y}), we have $\Tr_a \gamma = 1$. Actually, it is shown in \cite{Bassa} that the conorm of $ P $ with respect to $ \overline{\mathbb{F}}_q /\mathbb{F}_{q^c}  $ is 
		
		\[  \Con_{\overline{\mathbb{F}}_q /\mathbb{F}_{q^c}}  (P) := \sum_{ \Tr_a \gamma=1, \gamma \in \overline{ \mathbb{F}}_q} \overline{P}_\gamma,
		\]
		where $ \overline{P}_\gamma:=(x=0, y=0,x^{-q^b}y =\gamma ) $ denotes a rational place  in the function field  $ \overline{\mathbb{F}}_q (\mathcal{X}) $. Hence,  each rational place in the function field  $  {\mathbb{F}}_{q^c} (\mathcal{X}) $ can be written as $P_\gamma   =(x=0, y=0,x^{-q^b}y =\gamma ) $ with 
		\begin{equation}\label{eq:trace}
		\Tr_a \gamma= \gamma + \gamma ^{q} +  \gamma ^{q^2}+\ldots + \gamma ^{q^{a-1}}=1  \text{ and } \gamma \in \mathbb{F} _{q^c} .
		\end{equation} 
		Now we only need to show that there exist a unique solution in Equation (\ref{eq:trace}) if and only if $ p \nmid a $.
		If $\gamma$ is a solution in Equation (\ref{eq:trace}), then 
		\begin{equation} \label{eq:gamma}
		\gamma^q + \gamma ^{q^2} +  \gamma ^{q^3}+\ldots + \gamma ^{q^{a }}=1   .
		\end{equation} 
		Combining Equations (\ref{eq:trace}) and   (\ref{eq:gamma}) we find that $ \gamma ^{q^{a }}=\gamma  $ and therefore $   \gamma \in \mathbb{F} _{q^c} \cap  \mathbb{F} _{q^a} = \mathbb{F}_q $.
		Now Equation (\ref{eq:trace}) becomes $ a \gamma =1  $, and then $ \gamma =a^{-1} $ just in case $ p \nmid a $. In other words, 
		$ P_1 := (x=0, y=0, x^{-q^b} y = a^{-1})$ represent a rational place in $ \mathbb{F} _{q^c}(\mathcal{X}) $ when $ p \nmid a $.
		\item Similar to the proof of assertion 1).
		\item Suppose that $ V_\mu  $ is 	a rational place in $ V $.  It is easy to show that $ v_{V_\mu}(x^{q^{b  -1 } N_{a}}y^{q^{a -1 }N_{b}})=0 $. Let us assume that $ x^{q^{b -1 } N_{a}}y^{q^{a -1 }N_{b}} \equiv \mu  \mod V_\mu  $. Multiplying both sides of Equation(\ref{eq:curve_x_y}) with $  x^{q^{c-1}}y ^{-q^{a-1}} $, we obtain
	\[ \left(	x^{q^{b-1} N_a }y^{q^{a-1} N_b } \right)^{q-1}  +  1 = \kappa, \]
	where $ v_{V_\mu } (\kappa) > 0 $. Then we have $ \mu ^{q-1}=-1 $. Note that the equation $ \mu ^{q-1}=-1 $ has $q-1$ solutions in $ \mathbb{F}_{q^c} $ if and only if $ 2| q $. Now the assertion 3) can be   deduced similarly.
	\end{enumerate}
\end{proof}

\section {explicit bases for Riemann-Roch spaces}
In the rest of this paper, we shall always
assume that $p \nmid a$. Now by Proposition \ref{prop:places}, the rational place $P_1 $ exists. The divisor $P $ can be decomposed by $ P= P_1 + P _0 $, with $ \deg (P_1)  = 1$, and $ \deg(P_0)={q^{a-1}-1} $. We remark that the assumption $p \nmid a $ is not essential. If  $p\mid a $, then  $p \nmid b$. And we obtain a rational place $ Q_1 $ in $ Q $.
All the results in both cases are similar.

Our next aim is to   determine a basis for a   space $ \mathscr{L}( v P_1+r P_0 + s Q+tV) $ and then we can construct a generator matrix for our AG codes.  
Let $u: = 
a^{-1}- \dfrac{ {y^{q^{a}}}}{x} - \dfrac{ {y^{q}}}{x^{q^{a}}}  $, $ w :=\dfrac{y ^{q^a}}{xu}  $, and $ z :=\dfrac{y }{ x^{q^{b}}}$ be three elements in $ \mathbb{F}_{q^c} (\mathcal{X}) $. We want to construct a basis of $ \mathscr{L}( v P_1+r P_0 + s Q+tV) $ where all elements are of the form
\[ x^i z^j w^k \text{ with $( i ,j, k )\in \mathbb{Z}^3 $} .
\]
\begin{prop}\label{prop:div}
The divisors of $u$, $w$, and $z$ are given by
\begin{align*}
 \Div( u)&=
 (q^{c}-1)P_{1}- N_{c}V ,  \\
 	 \Div(z)&=-q^{b-1} N_c V + (q^c -1) Q,  \\
 \Div (w)&= (q^c -1) P_0 -(q^{a-1} -1 )N_c V .
\end{align*}

\end{prop}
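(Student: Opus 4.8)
The plan is to deduce all three divisors from a single analysis of $u$, since $\Div(z)$ is elementary and $\Div(w)$ will then follow formally. For $z=y/x^{q^{b}}$ I would simply write $\Div(z)=\Div(y)-q^{b}\Div(x)$ and read off the coefficients from Proposition~\ref{prop:valuation}: the contribution at $Q$ cancels, while the coefficient of $V$ reduces to the elementary identity $q^{b-1}N_{a}+q^{c-1}N_{b}=q^{b-1}N_{c}$, i.e. $N_{a}+q^{a}N_{b}=N_{c}$. For $u$, the first move is to simplify its defining expression: substituting $y=zx^{q^{b}}$ and using $a=b+1$ collapses the two fractions into $y^{q}/x^{q^{a}}=z^{q}$ and $y^{q^{a}}/x=z^{q^{a}}x^{q^{c}-1}=:T$, so that $u=a^{-1}-z^{q}-T$. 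Writing $S:=z^{q}+T$ we have $u=a^{-1}-S$, and the whole problem reduces to locating the zeros and poles of $S$.

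The crux, which I expect to be the main obstacle, is the identity $\Tr_{b}(S)=1-z$, extracted from the defining equation $\Tr_{b}(T)+\Tr_{a}(z)=1$ of $\mathcal{X}$. Indeed $\Tr_{b}(S)=\sum_{i=0}^{b-1}\bigl(z^{q^{i+1}}+T^{q^{i}}\bigr)=\sum_{j=1}^{b}z^{q^{j}}+\Tr_{b}(T)=\sum_{j=1}^{b}z^{q^{j}}+1-\Tr_{a}(z)$, and because $a-1=b$ the two $z$-sums telescope to leave exactly $1-z$; it is precisely the hypothesis $a=b+1$ that makes the leftover terms vanish, and nothing here depends on the base field. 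Granting this, the pole at $V$ is immediate: $v_{V}(1-z)=v_{V}(z)=-q^{b-1}N_{c}<0$, so $S$ must have a pole at $V$, and since $\Tr_{b}(S)=S+\dots+S^{q^{b-1}}$ is then dominated by $S^{q^{b-1}}$ we get $q^{b-1}v_{V}(S)=-q^{b-1}N_{c}$, hence $v_{V}(S)=v_{V}(u)=-N_{c}$.

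For the zero at $P_{1}$ I would use $z\equiv a^{-1}$ there. Setting $\delta:=z-a^{-1}$ and using $\Tr_{a}(a^{-1})=1$, the curve equation gives $\Tr_{a}(\delta)=-\Tr_{b}(T)$; comparing the dominant terms ($\delta$ in $\Tr_{a}(\delta)$, $T$ in $\Tr_{b}(T)$) together with $v_{P_{1}}(T)=q^{c}-1$ (from $z\equiv a^{-1}\neq0$ and $v_{P_{1}}(x)=1$) yields $v_{P_{1}}(\delta)=q^{c}-1$. Since $a^{-1}-z^{q}=(a^{-1}-z)^{q}$ has valuation $q(q^{c}-1)>q^{c}-1$, the term $T$ dominates and $v_{P_{1}}(u)=q^{c}-1$. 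It then remains to check that $u$ is a unit at the remaining places: at each place of $P_{0}$ and of $Q$ its residue is $a^{-1}-\gamma^{q}$, respectively $a^{-1}-\gamma'$ (where $\Tr_{a}\gamma=1$, $\Tr_{b}\gamma'=1$), and these are nonzero because $\gamma,\gamma'\neq a^{-1}$ — for instance $\gamma'=a^{-1}$ would force $\Tr_{b}(a^{-1})=ba^{-1}=1$, i.e. $a\equiv b\pmod p$, which is impossible for $a=b+1$. Thus the only pole of $u$ is at $V$ and the only zero at $P_{1}$, and $\deg\Div(u)=0$ forces $\Div(u)=(q^{c}-1)P_{1}-N_{c}V$ with no stray zeros among the $D_{\alpha,\beta}$.

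Finally, $\Div(w)$ is purely formal: since $w=T/u$ we have $\Div(w)=\Div(T)-\Div(u)$, where $\Div(T)=q^{a}\Div(z)+(q^{c}-1)\Div(x)=(q^{c}-1)P_{1}+(q^{c}-1)P_{0}-q^{a-1}N_{c}V$. Subtracting the formula just obtained for $\Div(u)$ gives $\Div(w)=(q^{c}-1)P_{0}-(q^{a-1}-1)N_{c}V$, completing the proof.
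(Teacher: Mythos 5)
Your proof is correct, but it takes a genuinely different route from the paper, which disposes of the hardest part in one line: the paper simply cites the computation of $\Div(u)$ from the reference of Bassa--Beelen--Garcia--Stichtenoth and declares $\Div(z)$, $\Div(w)$ to be direct calculations. You instead rederive $\Div(u)$ from scratch, and the key idea you supply --- the identity $\Tr_b(z^q+T)=1-z$ with $T=y^{q^a}/x=z^{q^a}x^{q^c-1}$, obtained by telescoping the curve equation under the hypothesis $a=b+1$ --- is exactly what makes the pole order at $V$ drop from the naive $-q^bN_c$ (the valuations of $z^q$ and $T$ at $V$ coincide, so cancellation must occur) down to $-N_c$; the zero order $q^c-1$ at $P_1$ then follows from the companion identity $\Tr_a(\delta)=-\Tr_b(T)$, and the degree count $N_c\deg V=q^c-1=(q^c-1)\deg P_1$ closes the divisor with no stray zeros. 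This buys the reader a self-contained, checkable argument where the paper offers only a citation, at the cost of about a page. Two small remarks: your explicit residue checks at $P_0$ and $Q$ are logically redundant, since once $v_V(u)=-N_c$ and $v_{P_1}(u)=q^c-1$ are established the equality $\deg\Div(u)=0$ already forbids any further zeros anywhere; and your formal derivations of $\Div(z)=\Div(y)-q^b\Div(x)$ and $\Div(w)=\Div(T)-\Div(u)$, resting on the identities $N_a+q^aN_b=N_c$ and $N_b(q^c-1)-q^bN_c=-(q^c-1)/(q-1)\cdot(q-1)$, agree with the stated formulas and are presumably what the paper means by ``calculated directly.''
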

\begin{proof}
	The divisor of $u$ is computed in \cite{Bassa}, the others can be calculated directly.
\end{proof}

   We denote by $ \floor{x}  $  the largest integer not greater than $ x $ and by $\ceil{x}$ the smallest integer not less than $ x $. It is easy to show that $ j = \ceil{\frac{\alpha}{\beta}} $ is equivalent to 
  \[  j \in \mathbb{Z} \quad \text{and } \alpha\leqslant \beta j < \alpha + \beta .
  \]
   Let us denote the lattice point set
  \begin{align*}
  \Omega_{v,r,s,t}:=\{ (i,j,k)|&   -v \leqslant i , \\
  &-r \leqslant  i+(q^c -1) k  <-r+(q^c -1)  ,  \\
  &-s \leqslant	-q^{a}i+(q^{c}-1)j <  (q^c -1)-s, \\
  &-t\leqslant (q^{a -1}N_{b})i-(q^{b -1 }N_{c})j -  (q^{a-1} - 1) N_{c} k    ~\},
  \end{align*} 
  	or equivalently,
  	\begin{align*}
  	\Omega_{v,r,s,t}:=\{ (i,j,k)|  
  	&~i   \geqslant -v, j= \ceil{\frac{q^a i -s}{q^c-1}},  k = \ceil{\frac{-i-r}{q^c-1}}, \\
  	&(q^{a -1}N_{b})i-(q^{b -1 }N_{c})j   + t \geqslant (q^{a-1} - 1) N_{c} k  ~\}.
  	\end{align*} 
  	
  	With these notations we have
  	the following proposition.
  \begin{prop}\label{prop:Omega_num}
	  There exists a constant $ C $ depend on the $r ,s,t$; such that, for $ v\geqslant C $, the number of $\Omega_{v,r,s,t}$ verifies
	  \[
	  \# \Omega _{v,r,s,t}=1-g+ v + (q^{a-1}-1)r +q^{b-1} s +(q-1)t.
	   \]
  \end{prop}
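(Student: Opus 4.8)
The plan is to exploit the equivalent description of $\Omega_{v,r,s,t}$, in which the second and third defining inequalities pin down $j=\ceil{\frac{q^a i-s}{q^c-1}}$ and $k=\ceil{\frac{-i-r}{q^c-1}}$ uniquely from $i$. Consequently the projection $(i,j,k)\mapsto i$ is a bijection from $\Omega_{v,r,s,t}$ onto the set of integers $i\geqslant -v$ satisfying the fourth (half-space) inequality, so the three-dimensional count collapses to a one-dimensional one. Writing $f(i)$ for the slack of that last inequality, I would first compute $f$ explicitly. Introducing the residues $m:=i+(q^c-1)k\in[-r,-r+q^c-1)$ and $n:=-q^a i+(q^c-1)j\in[-s,-s+q^c-1)$ and using $(q-1)N_c=q^c-1$, the coefficient of $i$ collapses after cancellation to $-1$, giving $(q-1)f(i)=-i-q^{b-1}n-(q^{a-1}-1)m+(q-1)t$. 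The key qualitative consequence is $f(i)\sim -i/(q-1)$ up to a bounded, $(q^c-1)$-periodic correction carried by $m$ and $n$.

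From this leading behaviour the dependence on $v$ is immediate. Since the correction is bounded, there is a constant $C$ (depending only on $r,s,t$) with $f(i)\geqslant 0$ for all $i\leqslant -C$ and $f(i)<0$ for all sufficiently large $i$. Hence for $v\geqslant C$ the entire block $-v\leqslant i\leqslant -C$ lies in $\Omega_{v,r,s,t}$, the finitely many admissible $i>-C$ form a set independent of $v$, and increasing $v$ by one adjoins exactly the single point $i=-v-1$. Therefore $\#\Omega_{v,r,s,t}=v+\kappa(r,s,t)$ for $v\geqslant C$, which already settles the coefficient $1$ of $v$; it remains to identify the constant $\kappa(r,s,t)$.

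To compute $\kappa$ I would sum over a complete residue system modulo $q^c-1$. The threshold $T(i):=(q-1)t-q^{b-1}n-(q^{a-1}-1)m$ depends only on $i\bmod(q^c-1)$, so the count splits into arithmetic progressions and the condition $f(i)\geqslant 0$ reads $i\leqslant T(i)$. As $i$ runs over a period, $m$ runs over $\{-r,\dots,-r+q^c-2\}$ and, since $\gcd(q^a,q^c-1)=1$, so does $n$ over $\{-s,\dots,-s+q^c-2\}$; summing $T$ over the period produces the linear terms $(q^{a-1}-1)r+q^{b-1}s+(q-1)t$ together with a main quadratic-in-$q$ term. The remaining and most delicate contribution is the accumulated floor/ceiling (sawtooth) defect of the progression counts, which is exactly a planar lattice-point count under the line of slope $q^a/(q^c-1)$; here Pick's theorem applies, and matching the area and boundary terms of the associated lattice triangle against the genus formula of Proposition \ref{prop:valuation} is what converts the quadratic term and the defect into the single clean constant $1-g$.

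The main obstacle is precisely this last bookkeeping: the naive area estimate reproduces the $r,s,t$-linear part but misses the constant by an amount governed by the rounding in the ceiling functions, and pinning that amount down exactly --- so that the quadratic-in-$q$ debris organizes into $-g$ and the leftover integer shift is exactly $+1$ --- is where the argument is genuinely technical and where Pick's theorem does the essential work. As a consistency check one notes that the target value equals $\deg(vP_1+rP_0+sQ+tV)+1-g$ for this divisor, using $\deg P_1=1$, $\deg P_0=q^{a-1}-1$, $\deg Q=q^{b-1}$, $\deg V=q-1$ from Proposition \ref{prop:valuation}; this is the Riemann--Roch dimension in the non-special range, which explains the shape of the formula and guides the computation, though a self-contained proof keeps the counting independent of Riemann--Roch to avoid circularity with the basis construction that follows.
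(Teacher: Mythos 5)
Your reduction is sound and in fact mirrors the paper's own strategy in a more streamlined form: the collapse of the three-dimensional count to a one-dimensional count over $i$ (since $j,k$ are ceilings determined by $i$), the computation of the slack of the last inequality as $(q-1)f(i)=-i-q^{b-1}n-(q^{a-1}-1)m+(q-1)t$ with $m\equiv i$, $n\equiv -q^a i \pmod{q^c-1}$ confined to windows of length $q^c-1$, and the conclusion that $\#\Omega_{v,r,s,t}=v+\kappa(r,s,t)$ for $v\geqslant C$ are all correct; I verified the cancellation of the coefficient of $i$ to $-1/(q-1)$. This periodic normalization is essentially what the paper packages as Lemma \ref{lem:Omega_equi} (multiplication by a monomial $x^{\cdot}z^{\cdot}w^{\cdot}$, i.e.\ a unimodular shift of the lattice), and your observation that $m$ and $n$ each sweep a full residue window, yielding the linear terms $(q^{a-1}-1)r+q^{b-1}s+(q-1)t$, is the correct mechanism.

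The genuine gap is that you stop exactly where the proof becomes nontrivial: the identification of $\kappa(r,s,t)$ with $1-g+(q^{a-1}-1)r+q^{b-1}s+(q-1)t$. You assert that the accumulated floor/ceiling defect of the per-residue-class progression counts ``organizes into $-g$ with leftover shift $+1$'' via Pick's theorem, but you never exhibit the lattice polygon, its area, or its boundary count, and this is precisely where an off-by-a-constant error is most likely. Concretely, the naive average of your threshold $T$ over a period already accounts for the linear terms plus $-\tfrac{1}{2}(q^c-2)(q^{a-1}+q^{b-1}-1)$, which differs from $1-g$ by $q/2$; producing that residual $q/2$ from the sawtooth defect is the actual content of the paper's Lemmas \ref{lem:segment}--\ref{lem:Omega_main}, where the count is organized as the triangle with vertices $O=(0,0)$, $A=(0,-q)$, $B=(q^c-1,q^{b+1}-q)$ (area $q(q^c-1)/2$, exactly $2q$ boundary lattice points, hence $q(q^c+1)/2$ points after removing a vertex), supplemented by the auxiliary set $\Phi_t$ giving $(N_c-1)(q^{b-1}-1)/2+t$, and only the sum of these matches the genus formula of Proposition \ref{prop:valuation}. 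You also do not verify that the finitely many admissible $i>-C$ contribute what the formula requires; the two-sided estimate showing the fourth inequality automatically holds (resp.\ fails) outside a bounded range needs the explicit inequality chain the paper carries out for $k\geqslant q^b+q^{b-1}$. Until that bookkeeping is done, the constant term --- which is the whole point of the proposition --- is unproved.
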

   We shall give a proof of Proposition \ref{prop:Omega_num} later. The following proposition is the main result of this paper which can be applied to encoding multi-point codes.
\begin{prop}\label{prop:basis_all}
	{The elements $ x^i z^j w^k $ with $ (i,j,k)\in\Omega_{v,r,s,t} $  form a basis of $ \mathscr{L}( v P_1 +r P_0 + s Q+tV) $.		
	}
\end{prop}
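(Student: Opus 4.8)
\medskip

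The plan is to read off membership in $\mathscr{L}(vP_1+rP_0+sQ+tV)$ directly from the principal divisor of a monomial, to obtain linear independence almost for free from the valuation at $P_1$, and then to upgrade the resulting independent family to a basis by the dimension count in Proposition \ref{prop:Omega_num}.

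First I would compute $\Div(x^iz^jw^k)=i\,\Div(x)+j\,\Div(z)+k\,\Div(w)$, valuations being additive on products. Substituting Proposition \ref{prop:valuation}(2) and Proposition \ref{prop:div} and splitting $P=P_1+P_0$, the only nonzero coefficients are
\[
i \text{ at } P_1,\quad i+(q^c-1)k \text{ at } P_0,\quad -q^ai+(q^c-1)j \text{ at } Q,\quad q^{a-1}N_b\,i-q^{b-1}N_c\,j-(q^{a-1}-1)N_c\,k \text{ at } V.
\]
Since each of $x,z,w$ has a single coefficient along each of $P_0$, $Q$, $V$, these four numbers record the valuation at every place occurring, so $\Div(x^iz^jw^k)+vP_1+rP_0+sQ+tV\geq 0$ is exactly the system of lower bounds $i\geq -v$, $i+(q^c-1)k\geq -r$, $-q^ai+(q^c-1)j\geq -s$, and $q^{a-1}N_b\,i-q^{b-1}N_c\,j-(q^{a-1}-1)N_c\,k\geq -t$. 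These are the first inequality on each line defining $\Omega_{v,r,s,t}$, so every $(i,j,k)\in\Omega_{v,r,s,t}$ produces an element of $\mathscr{L}(vP_1+rP_0+sQ+tV)$.

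Next I would establish independence. Using the stated equivalence $j=\ceil{\alpha/\beta}\Leftrightarrow\alpha\leq\beta j<\alpha+\beta$, the upper bounds on the second and third lines force $k=\ceil{(-i-r)/(q^c-1)}$ and $j=\ceil{(q^ai-s)/(q^c-1)}$, so on $\Omega_{v,r,s,t}$ the pair $(j,k)$ is a function of $i$ and distinct lattice points have distinct first coordinates. As $v_{P_1}(x)=1$ while $v_{P_1}(z)=v_{P_1}(w)=0$, the monomial $x^iz^jw^k$ has $v_{P_1}$-valuation exactly $i$; hence the monomials indexed by $\Omega_{v,r,s,t}$ have pairwise distinct valuations at $P_1$ and are therefore linearly independent over $\mathbb{F}_{q^c}$.

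Finally I would count. For $v\geq C$ the divisor has degree exceeding $2g-2$, so Riemann-Roch gives $\dim\mathscr{L}(vP_1+rP_0+sQ+tV)=v+(q^{a-1}-1)r+q^{b-1}s+(q-1)t+1-g$, which equals $\#\Omega_{v,r,s,t}$ by Proposition \ref{prop:Omega_num}; an independent set of this cardinality is a basis. To drop the hypothesis $v\geq C$ I would descend in $v$: only the bound $i\geq -v$ depends on $v$, so $\Omega_{v,r,s,t}=\{(i,j,k)\in\Omega_{v',r,s,t}:i\geq -v\}$ and $\mathscr{L}(vP_1+\cdots)\subseteq\mathscr{L}(v'P_1+\cdots)$ for $v\leq v'$. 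Expanding any $f\in\mathscr{L}(vP_1+rP_0+sQ+tV)$ in the large-$v'$ basis, the distinctness of the valuations $v_{P_1}(x^iz^jw^k)=i$ gives $v_{P_1}(f)=\min\{\,i:\text{coefficient nonzero}\,\}$, and $v_{P_1}(f)\geq -v$ kills every monomial with $i<-v$, so $f$ lies in the span of the $\Omega_{v,r,s,t}$-monomials. I expect the main friction to be purely bookkeeping---computing the four valuations consistently with $P=P_1+P_0$ and checking that the upper bounds really pin down one monomial per admissible $i$; the substantive combinatorial difficulty is quarantined in Proposition \ref{prop:Omega_num}.
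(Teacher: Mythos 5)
Your proposal follows the paper's own proof essentially step for step: membership in $\mathscr{L}(vP_1+rP_0+sQ+tV)$ read off from $\Div(x^iz^jw^k)$, linear independence from the pairwise distinct valuations at $P_1$, and the dimension count via Riemann--Roch together with Proposition \ref{prop:Omega_num}. The only difference is that you spell out the descent from large $v$ to arbitrary $v$ (expanding $f$ in the large-$v'$ basis and using $v_{P_1}(f)=\min\{i\}\geqslant -v$), a step the paper leaves implicit; this is a welcome addition rather than a divergence.
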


 \begin{proof}
 	Proposition \ref{prop:valuation} and Proposition  \ref{prop:div}  imply
 	\begin{align*}
 	\Div( x^i z^j w^k ) & =  i\Div( x)+j \Div( z) + k \Div (w)\\
 	& = iP_0 +i P_1 +q^{a-1} N_{b}i V-q^{a}i Q \\
 	& \quad -q^{b-1} N_c j V + (q^c -1) j Q  \\
 	& \quad +(q^c -1) k P_0 -(q^{a-1} -1 )N_c k V  \\
 	& =  i P_1 +\left(i + (q^c-1) k \right)P_0 + \left(-q^a i +(q^c -1 )j \right)Q  \\
 	&  \quad + \left(q^{a-1} N_{b}i -q^{b-1} N_c j-(q^{a-1} -1 )N_c k \right)V.
 	\end{align*}
 	Thus, $ x^i z^j w^k \in \mathscr{L}(vP_1 + r P_0 +sQ +tV ) $ if and only if the following conditions hold
 	\begin{align*}
 	& P_1:  -v \leqslant i  ,\\
 	& P_0: -r \leqslant  i+(q^c -1) k  ,   \\
 	& Q: -s \leqslant	-q^{a}i+(q^{c}-1)j  , \\
 	& V:-t\leqslant (q^{a -1}N_{b})i-(q^{b -1 }N_{c})j -  (q^{a-1} - 1) N_{c} k   .
 	\end{align*} 	
 	Hence, all the elements in  $\left\{ x^{i}z^{j}w^k|(i,j,k)\in\Omega_{v,r,s,t}\right\} $ are contained in $ \mathscr{L}(vP_1 + r P_0 +sQ +tV ) $. 
 	
 	Note that for $ (i,j,k)\in \Omega_{v,r,s,t} $ both $j$ and $k$ are determined by $i$ which is exactly the valuation of $ x^i z^j w^k  $ with respect to $P_1 $. In other words, all elements in $\left\{ x^{i}z^{j}w^k|(i,j,k)\in\Omega_{v,r,s,t}\right\} $ has different valuations with respect to $P_1 $, so they are linearly independent. 
 	
 	Denote by $d$ the dimension of  $ \mathscr{L}(vP_1 + r P_0 +sQ +tV ) $. To complete the proof, we only need to show that  $ d = \# \Omega_{v,r,s,t} $ for large $v $. 
 	By Riemann-Roch theorem, for large $v $, we have
 	\begin{equation}\label{eq:dim_L}
 	d  = 1-g+v+(q^{a-1}-1)r +q^{b-1}s+(q-1)t. 
 	\end{equation}
 	The proposition now follows from combining Equation (\ref{eq:dim_L}) and Propositions  \ref{prop:Omega_num}.
 \end{proof}
 
 \begin{cor}\label{cor:basis}
 	The elements $x^i y^j u^k$ with $ (i,j,k) \in \Omega'_{v,r,s,t} $ form a basis of the Riemann-Roch space $\mathscr{L}(vP_1 + rP_0 + sQ +tV)$, where $ \Omega'_{v,r,s,t}$ denotes the lattice point set 
 	\begin{align*}
 		\Omega'_{v,r,s,t}: =\{(i,j,k)|  -v & \leqslant i+ q^b j+ (q^c-1) k \\
 		-r & \leqslant i+ q^b j<-r+ (q^c-1) \\
 		-s & \leqslant -q^a i -j<  -s+(q^c-1)\\
 		-t & \leqslant q^{q-1}N_b i-q^{b-1}N_a j -N_c k ~ \}.
 	\end{align*}
 \end{cor}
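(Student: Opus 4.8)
The plan is to recognize Corollary \ref{cor:basis} as nothing more than a unimodular change of variables applied to Proposition \ref{prop:basis_all}: the new monomials $x^iy^ju^k$ are a re-indexing of the already-established basis $\{x^iz^jw^k : (i,j,k)\in\Omega_{v,r,s,t}\}$. So rather than recompute any divisors, I would transport the earlier result through an explicit bijection of exponent lattices.

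First I would eliminate $y$ and $u$ in favour of $x,z,w$. From the definitions $z=y/x^{q^b}$ and $w=y^{q^a}/(xu)$ one reads off $y=z\,x^{q^b}$ and, using $a+b=c$, $u=z^{q^a}x^{q^c-1}w^{-1}$. Substituting these into a monomial $x^{I}y^{J}u^{K}$ gives
\[
x^{I}y^{J}u^{K}=x^{\,I+q^{b}J+(q^{c}-1)K}\,z^{\,J+q^{a}K}\,w^{-K}.
\]
Thus the exponent vector $(I,J,K)$ of a $y,u$-monomial and the exponent vector $(i,j,k)$ of the corresponding $z,w$-monomial are related by
\[
i=I+q^{b}J+(q^{c}-1)K,\qquad j=J+q^{a}K,\qquad k=-K,
\]
whose matrix is upper triangular with diagonal entries $1,1,-1$, hence of determinant $-1$. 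In particular this is a bijection of $\mathbb{Z}^{3}$, and each function $x^{I}y^{J}u^{K}$ equals exactly one $x^{i}z^{j}w^{k}$.

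It then remains to verify that this bijection carries $\Omega'_{v,r,s,t}$ onto $\Omega_{v,r,s,t}$, i.e. that the four defining inequalities transform into one another. Substituting the relations above, the $P_{1}$-condition $-v\leqslant i$ becomes $-v\leqslant I+q^{b}J+(q^{c}-1)K$; the $P_{0}$-condition becomes $-r\leqslant I+q^{b}J<-r+(q^{c}-1)$ since the $K$-terms cancel; and a short computation using $q^{a+b}=q^{c}$ collapses the $Q$-condition $-s\leqslant-q^{a}i+(q^{c}-1)j<-s+(q^{c}-1)$ to $-s\leqslant-q^{a}I-J<-s+(q^{c}-1)$. These are exactly the first three conditions defining $\Omega'_{v,r,s,t}$.

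The only real work is the $V$-condition. After substitution its right-hand side is a linear form in $(I,J,K)$ whose coefficients must be simplified using $N_{k}=(q^{k}-1)/(q-1)$ and $a+b=c$. The coefficient of $J$ reduces via $q^{c-1}N_{b}-q^{b-1}N_{c}=-q^{b-1}N_{a}$, and the coefficient of $K$ reduces via $q^{a-1}N_{b}(q^{c}-1)-q^{c-1}N_{c}+(q^{a-1}-1)N_{c}=-N_{c}$, leaving precisely $-t\leqslant q^{a-1}N_{b}\,I-q^{b-1}N_{a}\,J-N_{c}\,K$, the fourth condition of $\Omega'_{v,r,s,t}$. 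Once all four inequalities are matched, the bijection shows that $\{x^{i}y^{j}u^{k}:(i,j,k)\in\Omega'_{v,r,s,t}\}$ and $\{x^{i}z^{j}w^{k}:(i,j,k)\in\Omega_{v,r,s,t}\}$ are the same set of functions, and the latter is a basis of $\mathscr{L}(vP_{1}+rP_{0}+sQ+tV)$ by Proposition \ref{prop:basis_all}. I expect the two coefficient identities for the $V$-condition to be the main obstacle, everything else being a formal transport of the previous proposition.
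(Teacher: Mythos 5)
Your proof is correct and is essentially the paper's own argument run in the opposite direction: the paper substitutes $x^iz^jw^k = x^{\,i-q^bj-k}\,y^{\,j+q^ak}\,u^{-k}$ and sets $(i',j',k')=(i-q^bj-k,\; j+q^ak,\; -k)$, which is exactly the inverse of your unimodular map $(I,J,K)\mapsto(I+q^bJ+(q^c-1)K,\; J+q^aK,\; -K)$, so both proofs transport the basis of Proposition \ref{prop:basis_all} through the same bijection of exponent lattices. Your explicit verification of the two coefficient identities for the $V$-condition (which the paper simply asserts) is correct, and it confirms in passing that the factor $q^{q-1}$ appearing in the paper's definition of $\Omega'_{v,r,s,t}$ should read $q^{a-1}$.
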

 \begin{proof}
 	By definition, 
 	\[
 	x^i z^ j w ^k = x^ {i-q^b j - k} y^{ j +q^a k}  u^{-k}  .
 	 \]
 	 Let $ i' : = i-q^b j - k  $, $ j' := j +q^a k $, $ k' := -k $.
 	 Then $ \Omega_{v,r,s,t} $ becomes
\begin{align*}
	\{(i',j',k')|  -v & \leqslant i'+ q^b j'+ (q^c-1) k' \\
			     -r & \leqslant i'+ q^b j'<-r+ (q^c-1) \\
			     -s & \leqslant -q^a i' -j'<  -s+(q^c-1)\\
			     -t & \leqslant q^{q-1}N_b i' -q^{b-1}N_a j' -N_c k'~ \},
\end{align*}
which is equal to $ \Omega'_{v,r,s,t} $.	So we have
$  \{{x^i y^j u^k}| (i,j,k) \in \Omega'_{v,r,s,t} \}= \{{x^i z^j w^k}| (i,j,k) \in \Omega_{v,r,s,t}\}   $. 
 \end{proof}
 
In order to count the elements of $ \Omega_{v,r,s,t} $ we need some preparations.
\begin{lem} \label{lem:segment}
	Assume that $\alpha$ is an integer.
	\begin{enumerate}
		\item Let $L^{(1)}_\alpha: = \{ ( {i},  {j})|     
		0   \leqslant  {i}<  q^c-1 , \text{ and } q^{a-1}N_b   {i}-q^{b-1}N_c  {j} = -\alpha  \} $. Then 
		\[ \#  L^{(1)}_\alpha = \begin{cases}
		q-1  &  \text{ for $ q^{b-1} | \alpha $}, \\
		0 &\text{otherwise}.
		\end{cases}  \]
		\item Let $L^{(2)}_\alpha: = \{ ( {i},  {j})|     
		0   \leqslant  {i}<  q^c-1 , \text{ and }   -q^a  {i} +(q^c-1)   {j} = -(q^c-1)q -\alpha  \}  $. Then  $ \#  L^{(1)}_\alpha = 1 $.
		\item Let $L^{(3)}_\alpha: = \{ ( {m},  {l})|     
		0   \leqslant  {m}<  q^b-1 , \text{ and }   q^{b-1} l - N_c m =  \alpha  \}  $. Then  $ \#  L^{(3)}_\alpha = 1 $.
	\end{enumerate}
\end{lem}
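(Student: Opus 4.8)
The plan is to regard each of the three sets as the solution set of a single linear Diophantine equation in two integer unknowns, one ranging over a prescribed interval and the other free. In every case the free unknown is uniquely recovered from the bounded one by division, so that the free variable imposes no extra constraint and the count collapses to the number of admissible values of the bounded variable, where ``admissible'' is a linear congruence. Thus for part~1 I would read the relation as $q^{a-1}N_b\,i\equiv-\alpha\pmod{q^{b-1}N_c}$ in the unknown $i$, with $j$ then forced; for part~2 as $q^a i\equiv\alpha\pmod{q^c-1}$; and for part~3 as $N_c\,m\equiv-\alpha\pmod{q^{b-1}}$. The counts then follow from the standard fact that $\lambda X\equiv\nu\pmod\mu$ is solvable iff $\gcd(\lambda,\mu)\mid\nu$, and when solvable its solutions form a single arithmetic progression with common difference $\mu/\gcd(\lambda,\mu)$, i.e.\ exactly $\gcd(\lambda,\mu)$ residues modulo $\mu$.

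The arithmetic core is three gcd evaluations. Two are immediate: $\gcd(q^a,q^c-1)=1$ and $\gcd(N_c,q^{b-1})=1$, both because $N_k\equiv1\pmod q$ forces every $N_k$ and every $q^m-1$ to be coprime to $q$. The delicate one, governing part~1, is $\gcd(q^{a-1}N_b,\,q^{b-1}N_c)=q^{b-1}$. Since $a-1=b$, I would factor out the common $q^{b-1}$ to get $q^{b-1}\gcd(qN_b,N_c)$, and then use $\gcd(q,N_c)=1$ to reduce this to $q^{b-1}\gcd(N_b,N_c)$. To finish I would invoke the repunit identity $\gcd(N_m,N_n)=N_{\gcd(m,n)}$, which follows from $\gcd(q^m-1,q^n-1)=q^{\gcd(m,n)}-1$ upon dividing by $q-1$, together with $\gcd(b,c)=\gcd(b,a+b)=\gcd(a,b)=1$; hence $\gcd(N_b,N_c)=N_1=1$ and the gcd equals $q^{b-1}$.

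With the gcds in hand the counting is bookkeeping: match each interval length against the progression's common difference. In part~1 solvability is exactly $q^{b-1}\mid\alpha$, and then the solutions are spaced $q^{b-1}N_c/q^{b-1}=N_c$ apart, so the interval $0\le i<q^c-1$ of length $q^c-1=(q-1)N_c$ contains precisely $q-1$ of them. In parts~2 and~3 the gcd is $1$, the solution is unique modulo $q^c-1$ (resp.\ $q^{b-1}$), and the interval is one full period, yielding a single lattice point; note that for part~3 the count of $1$ requires the interval to span exactly one residue class period $q^{b-1}$. The main obstacle is the part~1 gcd, specifically the reduction to $\gcd(N_b,N_c)=1$ through the coprimality $\gcd(a,b)=1$; once that is settled the rest reduces to counting an arithmetic progression inside an interval.
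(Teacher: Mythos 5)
Your proposal is correct and takes essentially the same route as the paper: the paper also treats part~1 as a linear Diophantine equation, asserts $\gcd(q^{a-1}N_b,\,q^{b-1}N_c)=q^{b-1}$, parametrizes the solutions as an arithmetic progression of step $N_c$ in $i$, and counts $q-1$ of them in $[0,(q-1)N_c)$, dismissing parts~2 and~3 as similar. Your derivation of that gcd via $\gcd(N_b,N_c)=N_{\gcd(b,c)}=N_1=1$ supplies a step the paper only asserts, and your remark that part~3 yields the count $1$ only when the interval is one full period $q^{b-1}$ correctly identifies that the bound $m<q^b-1$ in the statement should read $m<q^{b-1}$, consistent with how $L^{(3)}_\alpha$ is used for $\Phi_t$ later in the paper.
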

\begin{proof}
We prove only the first assertion of this lemma. The other conclusions can be deduced similarly.
Note that  the greatest common
factor of $ q^{a-1}N_b $ and $ q^{b-1} N_c $ is $ \gcd ( q^{a-1}N_b ,  q^{b-1} N_c )=q^{b-1} $. The equation $ q^{a-1}N_b i -q^{b-1} N_c j =-\alpha $  has integer solutions if and only if $ p^ {b-1} $ can be divided by   $\alpha$. If $ q^{b-1} \nmid \alpha $, then $  \#  L_\alpha^{(1)} =0 $. If $ q^{b-1}\mid \alpha $,  we can assume that  $ (i_0 , j_0) $  is an  integer solution. We claim that all the other solutions are given by $( i_0 + N_c \lambda ,j_0 + q N_b \lambda )$ with $ \lambda \in \mathbb{Z}  $.  It
is easy to check that $( i_0 + N_c \lambda ,j_0 + q N_b \lambda )$ is   an  integer solution. Conversely, let $  (i,j) $ be an integer solution different from $ (i_0, j_0) $. Substituting both solutions into the equation we obtain
\[ 
q  N_b( i -i_0 ) -   N_c (j-j_0) = 0.
 \]
 So $ N_c | (i- i_0)  $ and $ qN_b | (j- j_0) $. Hence, $  L_\alpha^{(1)} $ can be written as
 \[ L_\alpha^{(1)}=\{ ( i_0 + N_c \lambda ,j_0 + q N_b \lambda ) | 0 \leqslant i_0 + N_c \lambda < q^c -1 \} .
 \]
 Hence, $  \#  L_\alpha^{(1)} =q-1 $.
\end{proof}

\begin{lem} \label{lem:Psi_num}
	Suppose that  $ 0 \leqslant m < q^ {b-1 } $, and  $s, t \geqslant 0 $. Let  $ \Psi_{m}  $ be a lattice point set 
	\begin{align*}
		\Psi_{m}  :=   \{ ( {i},  {j})|     
		   \quad  0   &\leqslant  {i}<  q^c-1  , \\
		    -t    & \leqslant q^{a-1}N_b   {i}-q^{b-1}N_c  {j}+N_c m, \\
	      -s &- (q^c-1)q      \leqslant -q^a  {i} +(q^c-1)   {j}    ~\}.   
	\end{align*}
	Then  
	 \begin{equation}\label{eq:Psi_m_st}
	 \#\Psi_{m} =  (q^c+1) q /2  + \floor{\frac{t + N_c m }{q^{b-1}}}(q-1) + s , 
	\end{equation} 
	and 
\begin{equation}\label{eq:Psi_st}
	\sum_{m=0}^ {q^{b-1}-1}  \#\Psi _{m} = \frac{1}{2} (q^{ c+b-1} + q^{c+b} - q^c +q) +q^{b-1} s +(q-1)t.
\end{equation}
\end{lem}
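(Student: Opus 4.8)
The plan is to count $\Psi_m$ by slicing along the strip variable $i$. For each fixed $i$ with $0\le i<q^c-1$, the two inequalities defining $\Psi_m$ become bounds on $j$: the $t$-constraint gives the upper bound $j\le u(i):=\floor{\frac{q^{a-1}N_b\,i+t+N_c m}{q^{b-1}N_c}}$, while the $s$-constraint gives the lower bound $j\ge \ell(i):=\ceil{\frac{q^a i-s}{q^c-1}}-q$ (the integer $q$ pulled out of the ceiling). First I would verify that this interval is never empty on the relevant range: the upper bounding line has the smaller slope, and a short computation using $(q-1)N_k=q^k-1$ shows the continuous gap $u(i)-\ell(i)$ is strictly positive for every $i\le q^c-2$ (it degenerates only at $i=q^c-1$, which is excluded from the half-open strip). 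Hence the fibre over each admissible $i$ has exactly $u(i)-\ell(i)+1\ge 1$ points, and
\[ \#\Psi_m=\sum_{i=0}^{q^c-2}\bigl(u(i)-\ell(i)+1\bigr)=\sum_i u(i)-\sum_i \ell(i)+(q^c-1). \]

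Next I would evaluate the two lattice sums. Both are instances of the identity
\[ \sum_{i=0}^{N-1}\floor{\tfrac{ki+x}{N}}=\tfrac{(k-1)(N-1)}{2}+x \qquad(\gcd(k,N)=1,\ x\in\mathbb Z), \]
together with its ceiling analogue; these hold because $i\mapsto ki+x$ permutes the residues modulo $N$, which is precisely the lattice-point statement of Lemma \ref{lem:segment}. For $\sum_i\ell(i)$ the modulus is $N=q^c-1$ and the coefficient $q^a$ is coprime to it, so (after pulling out the trivial constant $q$) the identity applies directly. The sum $\sum_i u(i)$ is the delicate one: its modulus $q^{b-1}N_c$ shares the factor $q^{b-1}$ with the coefficient $q^{a-1}N_b=q^{b-1}\cdot qN_b$, and it does not equal the number $q^c-1=(q-1)N_c$ of summands. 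I would first write $t+N_c m=q^{b-1}M_0+\rho$ with $M_0=\floor{\frac{t+N_c m}{q^{b-1}}}$ and $0\le\rho<q^{b-1}$; this cancels the common factor and reduces the summand to $\floor{\frac{qN_b\,i+M_0}{N_c}}$ (the leftover $\rho/q^{b-1}<1$ being harmless under the floor). Then I would split $i=\mu N_c+\nu$ into $q-1$ complete residue blocks of length $N_c$, and since $\gcd(qN_b,N_c)=1$ (from $\gcd(b,c)=1$ and $N_c\equiv 1\bmod q$) the inner block sum is again governed by the identity, producing the term $(q-1)M_0$ and the bulk contribution.

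Assembling the pieces, $\#\Psi_m$ becomes $(q-1)M_0+s$ plus a closed expression in $q,N_b,N_c$; substituting $a=b+1$, $c=2b+1$ and repeatedly using $(q-1)N_k=q^k-1$ collapses that expression to $\tfrac{(q^c+1)q}{2}$, which gives \eqref{eq:Psi_m_st}. The main obstacle is exactly this evaluation of $\sum_i u(i)$: because the modulus is not coprime to the coefficient of $i$, one is forced to separate the integer and fractional parts of $t+N_c m$ modulo $q^{b-1}$, and it is this bookkeeping that produces the term $\floor{\frac{t+N_c m}{q^{b-1}}}(q-1)$, which carries the entire arithmetic subtlety of the statement.

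Finally, \eqref{eq:Psi_st} follows from \eqref{eq:Psi_m_st} by summing over $m=0,\dots,q^{b-1}-1$. The terms $\tfrac{(q^c+1)q}{2}$ and $s$ are constant in $m$ and contribute $q^{b-1}$ copies each, giving the $q^{b-1}s$ term; the remaining sum $\sum_{m=0}^{q^{b-1}-1}\floor{\frac{N_c m+t}{q^{b-1}}}$ is handled by the same residue-permutation identity (this is the role of the form $q^{b-1}l-N_c m$ in Lemma \ref{lem:segment}, using $\gcd(N_c,q^{b-1})=1$), yielding $\tfrac{(N_c-1)(q^{b-1}-1)}{2}+t$. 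Multiplying by $(q-1)$ turns the $t$-part into $(q-1)t$ and, via $(q-1)(N_c-1)=q^c-q$, the constant part into $\tfrac12(q^{c+b-1}-q^c-q^b+q)$; adding $q^{b-1}\cdot\tfrac{(q^c+1)q}{2}$ produces the stated value $\tfrac12(q^{c+b-1}+q^{c+b}-q^c+q)$.
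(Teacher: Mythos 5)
Your proof is correct, but it takes a genuinely different route from the paper's. The paper decomposes $\Psi_m$ geometrically into three pieces: a triangle $\Psi^{(0)}$ whose lattice points are counted with Pick's theorem, and two fringe regions $\Psi^{(1)}$, $\Psi^{(2)}$ that are unions of the segments $L^{(1)}_\alpha$, $L^{(2)}_\alpha$ of Lemma \ref{lem:segment}, contributing $(q-1)\floor{(t+N_cm)/q^{b-1}}$ and $s$ respectively; the sum over $m$ is then handled by a second application of Pick's theorem to the auxiliary set $\Phi_t$. You instead slice along $i$ and reduce everything to closed-form evaluations of $\sum_i\floor{(ki+x)/N}$ via the residue-permutation identity, so Pick's theorem disappears entirely and is replaced by an elementary floor-sum computation (resting on the same arithmetic fact that underlies Lemma \ref{lem:segment}). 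Your treatment of the one delicate sum, $\sum_i u(i)$, where the modulus $q^{b-1}N_c$ shares the factor $q^{b-1}$ with the coefficient $q^{a-1}N_b$, is exactly right: writing $t+N_cm=q^{b-1}M_0+\rho$, discarding $\rho/q^{b-1}<1$ under the floor, and splitting the range $0\leqslant i<(q-1)N_c$ into $q-1$ complete residue blocks mod $N_c$ produces the term $(q-1)M_0$, and the leftover constants do collapse to $(q^c+1)q/2$ (I verified the resulting identity). The paper's approach buys a reusable geometric picture and leans on the already-proved Lemma \ref{lem:segment}; yours is more self-contained and purely computational. One small inaccuracy: the fibres are \emph{not} all nonempty. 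For instance with $s=t=m=0$ and $i$ close to $q^c-2$ the interval $[L(i),U(i)]$ has length less than $1$ and may contain no integer, so the claim $u(i)-\ell(i)+1\geqslant 1$ is false. This is harmless: the strict positivity of the continuous gap $U(i)-L(i)$, which you do establish, already guarantees $\floor{U(i)}-\ceil{L(i)}+1\geqslant 0$ and that this expression equals the fibre count even when it vanishes, so only the assertion ``$\geqslant 1$'' should be dropped.
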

\begin{figure}[H]
	\centering
	\begin{tikzpicture}[scale=0.45]
 	\path [fill=orange] (0,0)--(0,-3) -- (8,2) -- (0,0);
	\draw [green] (0 ,0 )--(8  ,2 );
	\draw [green] (0,2)--(8,4);
	
	\draw [red]  (0,-6.5)--(8,-1.5);
	\draw [red]  (0,-3)--(8,2);
	\draw [blue] (0,-7)--(0,5);
	\draw [blue,dashed] (8,-7)--(8,5);
	\draw [->] (-6,0)--(-4,0);
	\draw [->] (-6,0)--(-6,2);
	\draw[color=black] node [yshift=-1ex,xshift=1ex] at (-6,2) { $ j $ };
	\draw[color=black] node [yshift=-1ex,xshift=1ex] at (-4,0) { $ i $ };
	
	\draw[color=black] node [yshift=-1ex,xshift=1ex] at (8,2) { $ B $ };
	\draw[color=black] node [yshift=-1ex,xshift=-1ex] at (0,0) { $ O $ };
	\draw[color=black] node [yshift=0ex,xshift=-1ex] at (0,-3) { $ A $ };
 
\draw[color=black] node [yshift=0ex,xshift=0ex] at (4.6,-2.5) { $ \Psi^{(2)} $ };
\draw[color=black] node [yshift=0ex,xshift=0ex] at (2.6,-0.5) { $ \Psi^{(0)} $ };
\draw[color=black] node [yshift=0ex,xshift=0ex] at (4.6, 2.5) { $ \Psi^{(1)}  $ };
 
\draw [fill] (0, -6) circle [radius=0.1];
\draw [fill] (0, -5) circle [radius=0.1];
\draw [fill] (0, -4) circle [radius=0.1];
\draw [fill] (0, -3) circle [radius=0.1];
\draw [fill] (0, -2) circle [radius=0.1];
\draw [fill] (0, -1) circle [radius=0.1];
\draw [fill] (0, 0) circle [radius=0.1];
\draw [fill] (0, 1) circle [radius=0.1];
\draw [fill] (0, 2) circle [radius=0.1];
\draw [fill] (1, -5) circle [radius=0.1];
\draw [fill] (1, -4) circle [radius=0.1];
\draw [fill] (1, -3) circle [radius=0.1];
\draw [fill] (1, -2) circle [radius=0.1];
\draw [fill] (1, -1) circle [radius=0.1];
\draw [fill] (1, 0) circle [radius=0.1];
\draw [fill] (1, 1) circle [radius=0.1];
\draw [fill] (1, 2) circle [radius=0.1];
\draw [fill] (2, -5) circle [radius=0.1];
\draw [fill] (2, -4) circle [radius=0.1];
\draw [fill] (2, -3) circle [radius=0.1];
\draw [fill] (2, -2) circle [radius=0.1];
\draw [fill] (2, -1) circle [radius=0.1];
\draw [fill] (2, 0) circle [radius=0.1];
\draw [fill] (2, 1) circle [radius=0.1];
\draw [fill] (2, 2) circle [radius=0.1];
\draw [fill] (3, -4) circle [radius=0.1];
\draw [fill] (3, -3) circle [radius=0.1];
\draw [fill] (3, -2) circle [radius=0.1];
\draw [fill] (3, -1) circle [radius=0.1];
\draw [fill] (3, 0) circle [radius=0.1];
\draw [fill] (3, 1) circle [radius=0.1];
\draw [fill] (3, 2) circle [radius=0.1];
\draw [fill] (4, -4) circle [radius=0.1];
\draw [fill] (4, -3) circle [radius=0.1];
\draw [fill] (4, -2) circle [radius=0.1];
\draw [fill] (4, -1) circle [radius=0.1];
\draw [fill] (4, 0) circle [radius=0.1];
\draw [fill] (4, 1) circle [radius=0.1];
\draw [fill] (4, 2) circle [radius=0.1];
\draw [fill] (4, 3) circle [radius=0.1];
\draw [fill] (5, -3) circle [radius=0.1];
\draw [fill] (5, -2) circle [radius=0.1];
\draw [fill] (5, -1) circle [radius=0.1];
\draw [fill] (5, 0) circle [radius=0.1];
\draw [fill] (5, 1) circle [radius=0.1];
\draw [fill] (5, 2) circle [radius=0.1];
\draw [fill] (5, 3) circle [radius=0.1];
\draw [fill] (6, -2) circle [radius=0.1];
\draw [fill] (6, -1) circle [radius=0.1];
\draw [fill] (6, 0) circle [radius=0.1];
\draw [fill] (6, 1) circle [radius=0.1];
\draw [fill] (6, 2) circle [radius=0.1];
\draw [fill] (6, 3) circle [radius=0.1];
\draw [fill] (7, -2) circle [radius=0.1];
\draw [fill] (7, -1) circle [radius=0.1];
\draw [fill] (7, 0) circle [radius=0.1];
\draw [fill] (7, 1) circle [radius=0.1];
\draw [fill] (7, 2) circle [radius=0.1];
\draw [fill] (7, 3) circle [radius=0.1];
	\end{tikzpicture}
	\protect\caption{The lattice point set $\Psi_{m} $ }
\end{figure}
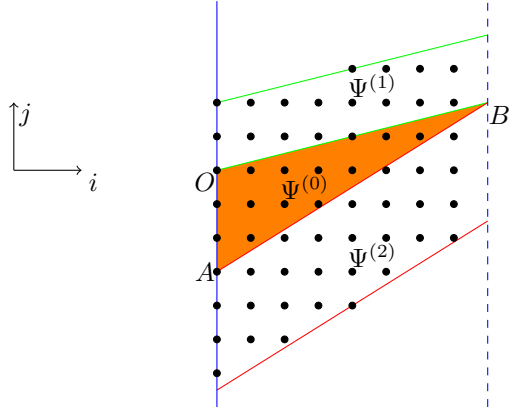

\begin{proof}
	We  split $ \Psi_{m} $ into three  parts $\Psi ^{(0)}  $, $ \Psi ^{(1)} $  and $ \Psi ^{(2)} $; namely 
	\begin{align*}
		\Psi ^{(0)} : =   \{ ( {i},  {j})|     
		 &~ 0   \leqslant  {i}<  q^c-1 , \\
		 &~ 0  \leqslant q^{a-1}N_b   {i}-q^{b-1}N_c  {j}, \\
		& -(q^c-1)q    \leqslant -q^a  {i} +(q^c-1)   {j}    ~\} ,   
	\end{align*}	
	\begin{align}
		\Psi ^{(1)} : =   \{ ( {i},  {j})|     
	&	~ 0   \leqslant  {i}<  q^c-1 , \nonumber \\
	& -t-N_c m   \leqslant q^{a-1}N_b   {i}-q^{b-1}N_c  {j} < 0, \nonumber \\
	&	-(q^c-1)q   \leqslant -q^a  {i} +(q^c-1)   {j}  ~\} , \label{eq:Psi_1_2}
	\end{align}
	and
	\begin{align*}
		\Psi ^{(2)} : =   \{ ( {i},  {j})|     
		& ~0   \leqslant  {i}<  q^c-1 , \\
		& -t-N_c m \leqslant q^{a-1}N_b   {i}-q^{b-1}N_c  {j}, \\
		& -s -(q^c-1)q    \leqslant -q^a  {i} +(q^c-1)   {j} < -(q^c-1)q   ~\}  .   
	\end{align*}
	
Then we have $ \# \Psi_{m}  = \# \Psi^{(0)} + \#\Psi^{(1)}   + \# \Psi^{(2)}  $. Equation (\ref{eq:Psi_m_st}) now follows from the following assertions
	\begin{enumerate}
		\item $ \# \Psi^{(0)}= q(q^c+1)/2 $.
		\item $ \# \Psi^{(1)}=(q-1) \floor{ \dfrac{t+N_c m}{q^{b-1}}}$.
		\item $ \# \Psi^{(2)}=s$.
	\end{enumerate}
	
	Let  $ O =( 0, 0)$, $A= (0, -q )$, and
	$B= (q^c -1, q^{b+1}-q)$. Denote 	by $S $  the area  of the triangle $ \triangle OAB $, and by $ M  $ the number of lattice points in the boundary of $ \triangle OAB $. Applying Pick's Theorem, the number $I $ of lattice points within $ \triangle OAB $ verifies
	\[
	I = S - M/2 + 1 .
	\]
	It is easy to see that $ S = q(q^c-1)/2 $.  In the same notations of Lemma \ref{lem:segment}, we have $ M = \# L_0 ^{(1)}+ \# L _0^{(2)} + q-1+1= 2q $.
	Since $  \Psi^{(0)} $ contains exactly the lattice points in the triangle $  \triangle OAB $ expect the vertex $B $, we see that
	\begin{align*}
		\# \Psi^{(0)}&=\# \triangle OAB -1= I+ M -1 \\
		&= S+ M/2 \\
		& = q(q^c+1)/2.
	\end{align*}
	This proves the first assertion. Put $ \alpha: = q^{b-1}N_c j-q^{a-1} N_b i $. For $0 \leqslant i < q^c -1 $ and $ \alpha > 0$, we have
	\begin{align*}
		-q^a i + (q^c -1 ) j & = -q^a i + \frac{(\alpha + q^{a-1} N_b i)(q-1)}{q^{b-1}} \\
		& = \frac{(q-1) \alpha - q^{a-1} i }{q^{b-1}} \\
		& > -q (q^c-1).
	\end{align*}
	So Condition (\ref{eq:Psi_1_2}) is invalid, and therefore
	\[  \Psi^{(1)} = \bigcup_{\alpha=1}^{t+ N_c m}  L _{\alpha}^{(1)}.\]
	Applying Lemma \ref{lem:segment}, we get
	\begin{align*}
		\#\Psi^{(1)}&=  \sum_{\alpha =1 }^{ t+ N_c m} \# L ^{(1)} _{\alpha}\\
		& =  (q-1) \floor{\frac{ t + N_c m  }{q^{b-1}}}  ,
	\end{align*}
	which completes the proof of the  second assertion.
	Similar to the proof above, we find that $ \#\Psi^{(2)}= s $.

	In order to complete the proof, we introduce a lattice point set 
	\[ \Phi_t : = \{(m,l)| 0 \leqslant m< q^{b-1}, 0 < l \leqslant \frac{t+ N_c m}{{q^{b-1}}}  \} .
	\]
	Then 
	\begin{equation}\label{eq:Phi_t_1}
		\# \Phi_t =  \sum_{m=0}^ {q^{b-1}-1}  \floor{\frac{t + N_c m }{q^{b-1}}}.
	\end{equation}
 Note that  
 \begin{equation}\label{eq:Psi_t_1}
	\Phi_{t} = \Phi_{0 } \bigcup  \left( \bigcup_{\alpha = 1}^{t} \# L^{(3)}_\alpha  \right).
\end{equation}
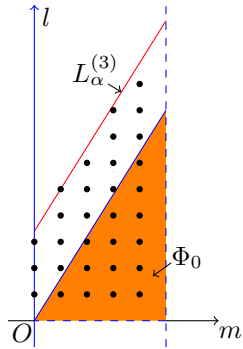
\begin{figure}[H]
	\centering
	\begin{tikzpicture}[scale=0.35]
	\path [fill=orange] (0,0)--(5,0) -- (5,8) -- (0,0);
	\draw [blue] (0 ,0 )--(5,8);
	\draw [red] (0,3.4)--(5,11.4);
	
	\draw [blue,->] (0,-1)--(0,12);
	\draw [blue,dashed] (5,-1)--(5,12);
	\draw [blue,dashed] (5,0)--(0,0);
	\draw [->] (5,0)--(7,0);
	\draw (-1,0)--(0,0);
	\draw[color=black] node [yshift=-1ex,xshift=1ex] at (0,12) { $ l $ };
	\draw[color=black] node [yshift=-1ex,xshift=1ex] at (7,0) { $ m $ };
	
	\draw[color=black] node [yshift=-1ex,xshift=-1ex] at (0,0) { $ O $ };
    \draw[color=black] node   at (5.8,2.4) { $ \Phi_0 $ };
    \draw [->] (5.2,2.2)--(4.5,1.7);
    \draw[color=black] node [yshift=-1ex,xshift=-1.5ex] at (3,10) { $ L_\alpha^{(3)} $ };
    \draw [->] (2.8,9)--(3.3,8.7);

		\draw [fill] (0, 1) circle [radius=0.1];
		\draw [fill] (1, 1) circle [radius=0.1];
		\draw [fill] (2, 1) circle [radius=0.1];
		\draw [fill] (3, 1) circle [radius=0.1];
		\draw [fill] (4, 1) circle [radius=0.1];
\draw [fill] (0, 2) circle [radius=0.1];
\draw [fill] (1, 2) circle [radius=0.1];
\draw [fill] (2, 2) circle [radius=0.1];
\draw [fill] (3, 2) circle [radius=0.1];
\draw [fill] (4, 2) circle [radius=0.1];
\draw [fill] (0, 3) circle [radius=0.1];
\draw [fill] (1, 3) circle [radius=0.1];
\draw [fill] (2, 3) circle [radius=0.1];
\draw [fill] (3, 3) circle [radius=0.1];
\draw [fill] (4, 3) circle [radius=0.1];

		\draw [fill] (1, 4) circle [radius=0.1];
		\draw [fill] (2, 4) circle [radius=0.1];
		\draw [fill] (3, 4) circle [radius=0.1];
		\draw [fill] (4, 4) circle [radius=0.1];
 
		\draw [fill] (1, 5) circle [radius=0.1];
		\draw [fill] (2, 5) circle [radius=0.1];
		\draw [fill] (3, 5) circle [radius=0.1];
		\draw [fill] (4, 5) circle [radius=0.1];

		\draw [fill] (2, 6) circle [radius=0.1];
		\draw [fill] (3, 6) circle [radius=0.1];
		\draw [fill] (4, 6) circle [radius=0.1];
		
		\draw [fill] (3, 7) circle [radius=0.1];
		\draw [fill] (4, 7) circle [radius=0.1];
		\draw [fill] (3, 8) circle [radius=0.1];
		\draw [fill] (4, 8) circle [radius=0.1];
	    \draw [fill] (4, 9) circle [radius=0.1];
	\end{tikzpicture}
	\protect\caption{The lattice point set $\Phi_{t} $ }
\end{figure}

	Using Pick's Theorem, it is easy to show that 
    \begin{equation}
    \# \Phi_0 = (N_c -1 )(q^{b-1}-1)/2 .
    \end{equation}
	 By Lemma \ref{lem:segment} it follows that
	 \begin{equation}
	 \sum_{\alpha = 1}^{t} \# L^{(3)}_\alpha =t .
	 \end{equation}
	 Hence, we deduce that
	\begin{align} 
		\# \Phi _t & = \# \Phi_0  + \sum_{\alpha = 1}^{t} \# L^{(3)}_\alpha \nonumber \\
		& =  (N_c-1)(q^{b-1}-1)/2 +t .\label{eq:Phi_t_2}
	\end{align}
	If we combine  Equations (\ref{eq:Phi_t_1}) and (\ref{eq:Phi_t_2}), then
	we get

\begin{equation}\label{eq:sum_m}
	\sum_{m=0}^ {q^{b-1}-1}  \floor{\frac{t + N_c m }{q^{b-1}}} =  (N_c-1)(q^{b-1}-1)/2 +t.
\end{equation}
 Using Equations (\ref{eq:Psi_m_st}) and (\ref{eq:sum_m}) we obtain
	\[ 
	\sum_{m=0}^ {q^{b-1}-1}  \#\Psi^{m}_{s,t} = \frac{1}{2} (q^{ c+b-1} + q^{c+b} - q^c +q) +q^{b-1} s +(q-1)t,
	\]
	which completes the proof.
\end{proof}

\begin{lem} \label{lem:Omega_main}
	Let $\Omega_{v,0,s,t}$ be the lattice point set $\Omega_{v,r,s,t}$  with $r=0$. For $ v\geqslant v_0:= (q^c-1)(q^b +  q^{b-1} -1 )  $,  and $ s, t \geqslant 0 $,  we have 
	
	\[ \# \Omega_{v,0,s,t}  
	=   1-g+v+q^{b-1}s +(q-1)t .\]
	
\end{lem}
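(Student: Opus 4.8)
The plan is to exploit the structural collapse that occurs when $r=0$. In the definition of $\Omega_{v,r,s,t}$ the $P_0$-inequality $-r\leqslant i+(q^c-1)k<-r+(q^c-1)$ and the $Q$-inequality $-s\leqslant -q^ai+(q^c-1)j<(q^c-1)-s$ pin down $k$ and $j$ as explicit ceilings of $i$, namely (with $r=0$) $k=\ceil{\frac{-i}{q^c-1}}$ and $j=\ceil{\frac{q^ai-s}{q^c-1}}$. Hence $\Omega_{v,0,s,t}$ is in bijection with the set of integers $i\geqslant -v$ for which the remaining $V$-inequality holds, and the three-dimensional count reduces to a one-parameter count with two nested floor/ceiling constraints. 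First I would record this reduction explicitly, so that $\#\Omega_{v,0,s,t}=\#\{\,i\geqslant -v: V\text{-inequality holds}\,\}$.

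Next I would separate the residue of $i$ modulo $q^c-1$ from its height: write $i=i'-(q^c-1)k$ with $i'\in\{0,1,\dots,q^c-2\}$ the canonical residue and $k$ the value forced by the $P_0$-condition, and correspondingly $j=j'-q^ak$ with $j'=\ceil{\frac{q^ai'-s}{q^c-1}}$. The key step is algebraic: substituting into the $V$-expression and using $N_b(q^c-1)=(q^b-1)N_c$ together with $a=b+1$, the coefficient of $k$ collapses and the inequality becomes simply $-t\leqslant q^{a-1}N_b\,i'-q^{b-1}N_c\,j'+N_c k$, which is exactly the shape of the second defining inequality of the sets $\Psi_m$ of Lemma \ref{lem:Psi_num}, with $k$ in the role of $m$. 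Verifying this collapse is the first place where care is needed. For each fixed residue $i'$ the admissible $k$ then form an integer interval with upper end $\floor{\frac{i'+v}{q^c-1}}$ (from $i\geqslant -v$) and lower end $k_{\min}(i')=q^{b-1}j'(i')-\floor{\frac{q^{a-1}N_bi'+t}{N_c}}$ (from the $V$-inequality). The hypothesis $v\geqslant v_0=(q^c-1)(q^b+q^{b-1}-1)$ is precisely what guarantees that every such interval is nonempty and suffers no truncation: since $\floor{\frac{v_0}{q^c-1}}=q^b+q^{b-1}-1$ while the leading terms of $q^{b-1}j'$ and of the floor cancel (leaving $k_{\min}$ of size $O(q^b)$), this bound dominates $\max_{i'}k_{\min}(i')$. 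I would pin down this threshold carefully, as an off-by-one here shifts the constant term.

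Summing the interval lengths over the complete residue system $i'=0,\dots,q^c-2$ then gives
\[
\#\Omega_{v,0,s,t}=\sum_{i'}\floor{\frac{i'+v}{q^c-1}}-q^{b-1}\sum_{i'}j'(i')+\sum_{i'}\floor{\frac{q^{a-1}N_bi'+t}{N_c}}+(q^c-1).
\]
The first sum equals exactly $v$ by the elementary identity $\sum_{i'=0}^{n-1}\floor{\frac{i'+v}{n}}=v$ with $n=q^c-1$. The remaining two floor-sums I would evaluate by the residue-system arguments underlying Lemma \ref{lem:Psi_num}: the coprimalities $\gcd(q^a,q^c-1)=1$ and $\gcd(q^{a-1}N_b,N_c)=1$ (the latter because $\gcd(b,c)=\gcd(b,2b+1)=1$) make $q^ai'$ and $q^{a-1}N_bi'$ run through complete residue systems, and the triangle count $\#\Phi_t$ together with Equation (\ref{eq:sum_m}) established there supplies the nontrivial floor-sum values. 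These two sums contribute precisely the linear terms $q^{b-1}s$ and $(q-1)t$ plus a fixed $q$-polynomial constant. Finally I would substitute the genus formula from Proposition \ref{prop:valuation} to identify that constant with $1-g$, yielding $1-g+v+q^{b-1}s+(q-1)t$.

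The main obstacle is the bookkeeping in the last step: matching the pure $q$-polynomial constant against $1-g$ requires carrying the residue-system sums (including the boundary behaviour of the ceilings for the relevant range of $s$) and collapsing a long expression using $c=2b+1$. A clean way to sidestep most of this arithmetic is to observe that $\#\Omega_{v,0,s,t}-\sum_{m=0}^{q^{b-1}-1}\#\Psi_m$ is independent of $s$ and $t$, since both quantities carry the same $q^{b-1}s+(q-1)t$ dependence; one then evaluates this $s,t$-free difference once and reduces the genus identification to the single case $s=t=0$.
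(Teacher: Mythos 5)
Your argument is correct in substance but follows a genuinely different route from the paper's. The paper slices $\Omega_{v,0,s,t}$ by the third coordinate: it shows the $V$-inequality is automatic for $k\geqslant q^b+q^{b-1}$ (that range contributing exactly $v-v_0$ points), and for the remaining $k$ it writes $k=q^{b-1}l+m$, changes variables, and collapses the union over $l$ into the two-dimensional sets $\Psi_m$ of Lemma \ref{lem:Psi_num}, which are counted via Pick's theorem. You instead parametrize by the residue $i'$ of $i$ modulo $q^c-1$ and the height $k$, use the same coefficient collapse (the $k$-coefficient in the $V$-form is exactly $+N_c$, as the identity $N_b(q^c-1)=(q^b-1)N_c$ shows), and reduce everything to one-dimensional sums of floors over complete residue systems; I have checked that your three sums evaluate to $v$, $\;q^{b-1}s-q^{b-1}(q^a+1)(q^c-2)/2$, and $(q-1)t+q^{a-1}(q^b-1)(q^c-2)/2-(q^c-q)/2$ respectively, and that the constant terms assemble to $1-g$. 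Your route trades the geometric triangle count for elementary residue-system arithmetic and avoids the $\Psi_m$ machinery entirely; the paper's route localizes all the number theory in one reusable lemma.

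One caveat: your claim that $v\geqslant v_0$ makes \emph{every} $k$-interval nonempty is not quite right. For $v=v_0$ the upper end is $U=q^b+q^{b-1}-1$ for every $i'$, while $k_{\min}(i')$ can reach $q^b+q^{b-1}$ (e.g. $b=1$, $s=t=0$, $i'=q^c-2$ gives $k_{\min}=q+1=U+1$), so some intervals are empty. What saves the computation is the weaker but sufficient bound $k_{\min}(i')\leqslant q^b+q^{b-1}\leqslant U+1$ — exactly the estimate the paper proves in its "$k\geqslant q^b+q^{b-1}$" step — which guarantees the signed length $U-k_{\min}+1$ is never negative and hence that your sum still counts correctly. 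You should state and prove that bound rather than nonemptiness; otherwise the plan goes through.
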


\begin{proof}
	In order to calculate the number of the set $ \Omega_{v,0,s,t} $, we fix the index $ k  $, and define
	\[
	\Theta_{k}:=  \{(i,j)| (i,j,k) \in \Omega_{v,0,s,t} \}.
	\]  
	Precisely  speaking, 
	\begin{align}
	\Theta_{k}:=\{ (i,j )|    -v &  \leqslant i ,\label{eq:Omega_1}\\ 
	0 & \leqslant  i+(q^c -1) k  <  q^c -1   ,  \label{eq:Omega_2}\\
	-s  & \leqslant	-q^{a}i+(q^{c}-1)j <  (q^c -1)-s,\label{eq:Omega_3} \\
	-t &\leqslant (q^{a -1}N_{b})i-(q^{b -1 }N_{c})j -  (q^{a-1} - 1) N_{c} k    ~\}. \label{eq:Omega_4}
	\end{align} 
	Then we have
	\[ \Omega_{v,0,s,t}=     \left(\bigcup_{k=-\infty }^{ q^b+q^{b-1}-1} \Theta_k \right) \bigcup\left(\bigcup_{k=q^b+q^{b-1}}^{\infty} \Theta_k \right) .\]
	We count it by two steps.
	\begin{enumerate}
		\item $\# \left( \bigcup_{k=q^b+q^{b-1}}^{\infty} \Theta_k \right) = v-v_0 $.

		Let $ (i,j)\in \Theta_k $, and $ M : = -q^a i + (q^c-1) j $. Inequality (\ref{eq:Omega_3}) tells us  $  M < (q^c-1) -s \leqslant q^c-1 $. It follows from Inequality (\ref{eq:Omega_2}) that $-i > (q^c-1)(k-1)$. For $ k \geqslant q^b + q^{b-1}$, we have
		\begin{align*}
		(q^{a -1}N_{b})i& -(q^{b -1 }N_{c})j-  (q^{a-1} - 1) N_{c} k   \\
		& = (q^{a -1}N_{b})i  -q^{b -1 }\frac{ M + q^a i  }{q-1}  -  (q^{a-1} - 1) N_{c} k   \\
		& = -\frac{q^{a-1} i }{q-1} - \frac{q^{b-1}M}{q-1} - (q^{a-1}-1)N_c k \\
		& >  \frac{q^{a-1} (q^c-1)(k-1) }{q-1} - \frac{q^{b-1}(q^c-1)}{q-1} - (q^{a-1}-1)N_c k \\
		& =  N_c (k- q^b -q^{b-1}) \\
		& \geqslant - t,
		\end{align*} 
		which means that Condition (\ref{eq:Omega_4}) is invalid. We claim that 
\begin{equation}\label{eq:combine}
		 \bigcup_{k=q^b+q^{b-1}}^{\infty} \Theta_k  = \left\lbrace (i,j) \left| -v \leqslant i < -v_0, j = \ceil{\frac{q^a i -s}{q^c-1}} \right. 
		\right\rbrace .
\end{equation} 
		To see this, we shall write down Condition (\ref{eq:Omega_2}) for various $k $. We put $v_\mu := (q^c-1)(q^b+q^{b-1}-1+\mu )$ for $ \mu \in \mathbb{N} $.  Then 
		\begin{align*}
		         -v_1\leqslant i& < -v_0  \quad \text{for } k=q^b+q^{b-1}, \\
		         -v_2\leqslant i& < -v_1  \quad \text{for } k=q^b+q^{b-1}+1, \\
		          -v_3\leqslant i& < -v_2  \quad \text{for } k=q^b+q^{b-1}+2, \\
		         & \ldots \\
	             -v_{\mu +1 }\leqslant i& < -v_\mu  \quad \text{for } k=q^b+q^{b-1}+ \mu. 
		\end{align*}
		Combining Condition (\ref{eq:Omega_2}) for $ k \geqslant q^b+q^{b-1} $, we get $ i < -v_0 $. We note that  Conditions  (\ref{eq:Omega_1}) and    (\ref{eq:Omega_3}) are  independent of $k$. Then     $ \bigcup_{k=q^b+q^{b-1}}^{\infty} \Theta_k $ becomes
			\begin{align*}
			 \{ (i,j )|    -v &  \leqslant i , \\ 
			   i  & <  v_0 ,   \\
			-s  & \leqslant	-q^{a}i+(q^{c}-1)j <  (q^c -1)-s ~\},
			\end{align*}
			which implies Equation (\ref{eq:combine}).
		By Equation (\ref{eq:combine}) one can easily verify that $ \# \left( \bigcup_{k=q^b+q^{b-1}}^{\infty} \Theta_k \right) = v-v_0 $.

		\item $	\# \left( \bigcup_{k=-\infty }^{ q^b+q^{b-1}-1}  \Theta_k \right) =  1-g+v_0+q^{b-1}s +(q-1)t.$

		It is important to write   $ k = q^ {b-1} l + m $ with $ 0 \leqslant m \leqslant q^{b-1} -1  $ and $ l \leqslant q $.
		Let $\widetilde{i} : = i+(q^c-1)k  $,    and $ \widetilde{j}= j+ q^a k - l $. Then $ \Theta_{k} $  becomes   
		\begin{align}
		{\widetilde{\Theta}}_{l,m}: =\{ (\widetilde{i}, \widetilde{j})|  \quad \widetilde{i} & \geqslant   -v+(q^c-1)k, \label{eq:tilde_1}\\
		\quad 0 & \leqslant \widetilde{i}<  q^c-1 ,\label{eq:tilde_2}\\
		-s & -(q^c-1) l    \leqslant -q^a \widetilde{i} +(q^c-1) \widetilde{j}  <-s  - (q^c-1) ( l-1) ,\label{eq:tilde_3}\\
		-t  & \leqslant q^{a-1}N_b \widetilde{i}-q^{b-1}N_c \widetilde{j}+N_c m  ~ \}\label{eq:tilde_4},
		\end{align}
		and then
	\begin{equation*}
		\bigcup_{k= -\infty }^{q^b+q^{b-1}-1} \Theta_k  =\bigcup_{ m =0 }^{ q^{b-1}-1} 	\bigcup_{ l = -\infty }^{q } \widetilde{\Theta}_{l,m}.
	\end{equation*}
	 Set $ \Psi _{m}:= \bigcup_{l=-\infty }^{q }  {\widetilde{\Theta}}_{l,m} $, we have
	 	\begin{equation*}
	 	\# \left( \bigcup_{k=-\infty }^{ q^b+q^{b-1}-1}  \Theta_k \right)   =\sum_{m=0 }^{ q^{b-1}-1}   \# \Psi_{m}. 
	 	\end{equation*}	 
		Note that Inequality (\ref{eq:tilde_1}) is invalid since  $    -v + (q^c-1)k \leqslant -v + v_0 \leqslant 0  $.  	
		 We see that  Conditions  (\ref{eq:tilde_2}) and    (\ref{eq:tilde_4}) are  independent of $l$, so we can combine Condition (\ref{eq:tilde_3}) for $ l \leqslant q$. Let $ s_{\mu } :=  s + (q^c -1 )(q  - \mu )$ for $ \mu  \in \mathbb{N}$. Then Condition   (\ref{eq:tilde_3}) can be expressed as
				\begin{align*}
				-s_0 \leqslant -q^a \widetilde{i} +(q^c-1) \widetilde{j} & < -s_1  \quad \text{for }  l=q, \\
				-s_1\leqslant -q^a \widetilde{i} +(q^c-1) \widetilde{j} & < -s_2  \quad \text{for } l=q-1,  \\
				-s_2\leqslant -q^a \widetilde{i} +(q^c-1) \widetilde{j} & < -s_3  \quad \text{for } l=q-2,  \\
				& \ldots \\
				-s_{\mu  }\leqslant -q^a \widetilde{i} +(q^c-1) \widetilde{j} & < -s_{\mu+1 } \quad \text{for } l=q- \mu. 
				\end{align*}
				This give a total condition $ -s_0\leqslant -q^a \widetilde{i} +(q^c-1) \widetilde{j}$.
		 So $ \Psi _{m} $ can be rewritten as
		\begin{align*}
		\Psi _{m} =   \{ (\widetilde{i}, \widetilde{j})|     
		 \quad 0 &  \leqslant \widetilde{i}<  q^c-1 , \\
		-t  & \leqslant q^{a-1}N_b \widetilde{i}-q^{b-1}N_c \widetilde{j}+N_c m, \\
	 -s	&-(q^c-1)q     \leqslant -q^a \widetilde{i} +(q^c-1)  \widetilde{j}    ~\}.  \label{eq:cup}
		\end{align*}
  Then we have by Lemma \ref{lem:Psi_num} that 
		\begin{align*}
		\# \left( \bigcup_{k=-\infty }^{ q^b+q^{b-1}-1}  \Omega_k \right) & = \sum_{m=0 }^{q^{b-1}-1} \# \Psi_{m} \\
		& = \frac{1}{2} (q^{ c+b-1} + q^{c+b} - q^c +q) +q^{b-1} s +(q-1)t \\
		& = 1-g+v_0 +q^{b-1}s +(q-1)t. 
		\end{align*}

	\end{enumerate}
	In summary, 
	\[ \# \Omega_{v,0,s,t}= 1-g+v +q^{b-1}s +(q-1)t. 
	\]
	This completes the proof.
\end{proof}

	\begin{lem}\label{lem:div_equi}
		If $s+ q^a r = \widehat{s}+\sigma (q^c-1) $, $ t -q^{a-1}N_b -(q^c -1) \sigma = \widehat{t} + N_c \lambda $, then 
    	\[ 	vP_1 +r P_0 +s Q +tV \sim \left( v + (q^c-1)\lambda -r \right) P_1 +\widehat{s} Q +\widehat{t} V .\]
	\end{lem}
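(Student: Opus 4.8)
The plan is to prove the linear equivalence in the standard way, by exhibiting a single rational function on $\mathcal X$ whose principal divisor is exactly the difference of the two divisors. First I would record that difference,
\[
\Delta := (vP_1 + rP_0 + sQ + tV) - \bigl((v + (q^c-1)\lambda - r)P_1 + \widehat s\,Q + \widehat t\,V\bigr),
\]
which has coefficient $r - (q^c-1)\lambda$ at $P_1$, coefficient $r$ at $P_0$, coefficient $s-\widehat s$ at $Q$, and coefficient $t-\widehat t$ at $V$, and is supported only on the four places $P_1,P_0,Q,V$. Thus it suffices to produce $f$ with $\Div(f)=\Delta$.

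The candidate is the monomial $f := x^{r} z^{\sigma} u^{-\lambda}$, assembled from the three functions whose divisors are given in Proposition \ref{prop:valuation} and Proposition \ref{prop:div}. The key structural point is that each of $x$, $z$, $u$ has divisor supported only on $\{P_1,P_0,Q,V\}$, so the same is true of $f$ and no conditions at the places $D_{\alpha,\beta}$ can arise. Using $\Div(x)=P_1+P_0+q^{a-1}N_b V-q^{a}Q$, $\Div(z)=(q^c-1)Q-q^{b-1}N_c V$, and $\Div(u)=(q^c-1)P_1-N_c V$, I would expand $\Div(f)=r\Div(x)+\sigma\Div(z)-\lambda\Div(u)$ and read off its coefficient at each of the four places.

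Next I match $\Div(f)$ against $\Delta$ place by place. The $P_0$-coefficient is $r$ on both sides immediately, since $z$ and $u$ contribute nothing there. The $Q$-coefficient of $\Div(f)$ is $-q^{a}r+(q^c-1)\sigma$, and this equals $s-\widehat s$ precisely by the first hypothesis $s+q^{a}r=\widehat s+\sigma(q^c-1)$. The $P_1$-coefficient is $r-(q^c-1)\lambda$, matching $\Delta$ directly because $u$ carries the factor $u^{-\lambda}$. This disposes of three of the four coordinates, leaving the $V$-coefficient as the only nontrivial comparison.

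The main obstacle is therefore the $V$-coefficient identity. The function $f$ contributes $q^{a-1}N_b\,r - q^{b-1}N_c\,\sigma + N_c\,\lambda$ at $V$, and one must verify that this equals $t-\widehat t$, which is exactly the step that consumes the second hypothesis. This is where the special arithmetic of the curve enters: after substituting $N_b=(q^{b}-1)/(q-1)$ and $N_c=(q^{c}-1)/(q-1)$ and using $a+b=c$ (so that $a-1+b=c-1$ and $q^{a-1+b}=q^{c-1}$), the mixed terms collapse, just as in the cancellations already seen in the proof of Lemma \ref{lem:Psi_num}. Once this identity is confirmed, $\Div(f)=\Delta$, so $\Delta$ is principal and the two divisors are linearly equivalent, completing the argument. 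I expect this final arithmetic reduction, rather than the bookkeeping of the other three coordinates, to be the delicate part.
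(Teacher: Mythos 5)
Your proof is correct and is essentially the paper's own argument: the paper adds to the left-hand divisor the principal divisor of $x^{(q^c-1)\lambda-r}z^{q^{a}\lambda-\sigma}w^{-\lambda}$, and your candidate $x^{r}z^{\sigma}u^{-\lambda}$ is, up to a nonzero constant, precisely the reciprocal of that monomial (since $x^{q^c-1}z^{q^a}u^{-1}w^{-1}$ has trivial divisor), so your place-by-place verification is the same computation with all signs reversed. The only caveat --- and it applies equally to the paper's own proof --- is that the $V$-coefficient identity you defer to the end holds only when the second hypothesis is read as $t-q^{a-1}N_b\,r+q^{b-1}N_c\,\sigma=\widehat t+N_c\lambda$; the form printed in the lemma (with the factor $r$ missing and $-(q^c-1)\sigma$ in place of $+q^{b-1}N_c\,\sigma$) appears to be a typo, as the substitution carried out in the proof of Lemma \ref{lem:Omega_equi} confirms.
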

	\begin{proof}
		By direct computation, we have
		\begin{align*}
		 vP_1 &+r P_0 +s Q +tV+\Div(x^{(q^c-1 )\lambda -r }z^{q^a \lambda -\sigma }w^{-\lambda})\\
		 & =vP_1  +r P_0 +s Q +tV +\left({(q^c-1 )\lambda -r }\right) \Div(x) \\
		 & \quad +  \left({q^a \lambda -\sigma }\right) \Div(z) -\lambda \Div(w) \\
		 &=	 vP_1  +r P_0 +s Q +tV \\
		 & \quad + \left({(q^c-1 )\lambda -r }\right) \left(P+q^{a-1} N_{b} V-q^{a}Q \right)\\
		 & \quad + \left({q^a \lambda -\sigma }\right) \left(-q^{b-1} N_c V + (q^c -1) Q \right)\\
		  & \quad - \lambda \left((q^c -1) P_0 -(q^{a-1} -1 )N_c V \right)\\
		  & =\left( v + (q^c-1)\lambda -r \right) P_1 + \widehat{s} Q +\widehat{t} V.
		\end{align*}
	\end{proof}
	\begin{lem}\label{lem:Omega_equi}
		Suppose that $s+ q^a r = \widehat{s}+\sigma (q^c-1) $ with $ 0 \leqslant \widehat{s}< q^c-1 $, and  $ t -q^{a-1}N_b -(q^c -1) \sigma = \widehat{t} + N_c \lambda $ with $ 0 \leqslant t < N_c $. Let $\widehat{v}:= v + (q^c-1)\lambda -r  $. Then 
	\[ 	\# \Omega_{v,r,s,t}= \# \Omega_{\widehat v,0,\widehat s,\widehat t}. \]
	\end{lem}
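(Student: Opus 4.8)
The plan is to exhibit an explicit bijection between the two lattice sets, induced by the very function that realizes the linear equivalence in Lemma~\ref{lem:div_equi}. Write $G := vP_1 + rP_0 + sQ + tV$ and $\widehat G := \widehat v P_1 + \widehat s Q + \widehat t V$, and let $h := x^{(q^c-1)\lambda - r} z^{q^a\lambda - \sigma} w^{-\lambda}$ be the function appearing there, so that by Lemma~\ref{lem:div_equi} one has $\Div(h) = \widehat G - G$. I would first recast the definition of $\Omega_{v,r,s,t}$ in divisor-theoretic terms: using the formula for $\Div(x^i z^j w^k)$ computed in the proof of Proposition~\ref{prop:basis_all}, the four defining conditions say precisely that the coefficients of $\Div(x^i z^j w^k) + G$ are non-negative at $P_1$ and $V$ and lie in the interval $[0, q^c-1)$ at $P_0$ and $Q$. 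The same reading applies to $\Omega_{\widehat v, 0, \widehat s, \widehat t}$ with $G$ replaced by $\widehat G$.

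Next I would introduce the affine map $T\colon \mathbb{Z}^3 \to \mathbb{Z}^3$ defined by
\[
T(i,j,k) := \bigl( i + r - (q^c-1)\lambda,\ j + \sigma - q^a\lambda,\ k + \lambda \bigr),
\]
which is a translation, hence a bijection of $\mathbb{Z}^3$. On the level of monomials $T$ corresponds to multiplication by $h^{-1}$: writing $(i',j',k') := T(i,j,k)$, one checks directly that $x^{i'} z^{j'} w^{k'} = x^i z^j w^k \cdot h^{-1}$.

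The heart of the argument is the divisor identity
\[
\Div(x^{i'} z^{j'} w^{k'}) + \widehat G = \Div(x^i z^j w^k) + G,
\]
which follows at once from $\Div(x^{i'}z^{j'}w^{k'}) = \Div(x^i z^j w^k) - \Div(h)$ together with $\Div(h) = \widehat G - G$. Because the two shifted divisors literally coincide, their coefficients at $P_1, P_0, Q, V$ are equal; hence $(i,j,k)$ satisfies the restated conditions defining $\Omega_{v,r,s,t}$ if and only if $(i',j',k')$ satisfies those defining $\Omega_{\widehat v, 0, \widehat s, \widehat t}$. Thus $T$ restricts to a bijection $\Omega_{v,r,s,t} \to \Omega_{\widehat v, 0, \widehat s, \widehat t}$, and the two cardinalities agree.

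The step I expect to require the most care is the bookkeeping for the upper bounds in the $P_0$- and $Q$-conditions: unlike the one-sided conditions at $P_1$ and $V$, these pin down $j$ and $k$ from $i$, so one must ensure the translation respects the half-open intervals $[0,q^c-1)$ and does not merely preserve the lower inequalities. In the divisor formulation this is automatic, since the coincidence of the two shifted divisors forces all four coefficients, and in particular the two constrained on both sides, to match exactly. Should one prefer a self-contained verification bypassing Lemma~\ref{lem:div_equi}, the only genuinely non-trivial computation is the $V$-condition, where the coefficient of $\lambda$ must be shown to collapse to $N_c$; this is exactly the identity $q^{a-1}N_b(q^c-1) = (q^{c-1}-q^{a-1})N_c$, which in turn rests on $a+b=c$.
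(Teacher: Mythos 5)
Your proof is correct and follows essentially the same route as the paper: the translation $T$ you define is exactly the paper's change of variables $i=\widehat{i}+(q^c-1)\lambda-r$, $j=\widehat{j}+q^a\lambda-\sigma$, $k=\widehat{k}-\lambda$, induced by the function from Lemma~\ref{lem:div_equi}. Your divisor-theoretic packaging (reading the four conditions as coefficient constraints on $\Div(x^iz^jw^k)+G$) just makes the verification that all four inequalities, including the two-sided ones, are preserved more transparent than the paper's direct substitution.
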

	\begin{proof}
		The proof of Lemma \ref{lem:div_equi} leads us to make a transformation $i = \widehat{i} + (q^c-1)\lambda -r $, $j=\widehat{j}+ q^a \lambda - \sigma $, and $k = \widehat{k} -\lambda $.  So we obtain 
			\begin{align*}
			\widehat{\Omega}_{v,r,s,t}:=\{ (\widehat{i},\widehat{j},\widehat{k})|   -\widehat{v}  &\leqslant \widehat{i}  ,\\
			 0 &\leqslant  \widehat{i}+(q^c -1) \widehat k  <0+(q^c -1)  ,  \\
			 -\widehat{s} &\leqslant	-q^{a}\widehat{i}+(q^{c}-1)\widehat j <  (q^c -1)-\widehat{s}, \\
			 -\widehat{t} & \leqslant (q^{a -1}N_{b})\widehat{i}-(q^{b -1 }N_{c}) \widehat{j} -  (q^{a-1} - 1) N_{c} \widehat{k}    ~\},
			\end{align*} 
			which implies the lemma.
	\end{proof}

	Now  Proposition \ref{prop:basis_all} follows easily from Lemmas \ref{lem:Omega_main} and \ref{lem:Omega_equi}.

\section{the properties of the codes}
 In this section, we study the linear code
 \[
 C_{v,r,s,t}=C_{\mathscr{L}}(  D,  vP_1 + rP_0 + s Q+t V).
 \] 
 The length of $C_{v,r,s,t}$ is $n:=\deg(D)=(q^{c}-1)q^{c-1}$. For convenience we set $ G:= vP_1 + rP_0 + s Q+t V $ with 
 \[ 
  \deg( G) = v + (q^{a-1}-1)r+ q^{b-1 }s+(q-1)t  .
  \]
 It is well known that the dimension of an AG code $ C_{\mathscr{L}}(  D, G) $ is given by
 \begin{equation}
 \dim C_{\mathscr{L}}(  D, G)=\dim\mathscr{L}(G)-\dim \mathscr{L}(G-D).\label{eq:dim}
 \end{equation}
Set $ R: = n+2g-2 $. If $\deg(G) >R$, then the Riemann-Roch Theorem and Equation (\ref{eq:dim}) yield 
  \begin{align*}
  \dim C_{v,r,s,t}&=(1-g+\deg(G)) -(1-g+\deg(G-D))\\
  &= \deg D = n,
  \end{align*}
 which is trivial. So we should only consider the case $0 \leqslant \deg(G)<R $. 
  
\begin{defn}
	 Two codes $ C_1, C_2 \subseteq \mathbb {F}^{n}_{q^c} $ are said to be \textbf{equivalent}  if there is a vector $ a =(a_1, a_2, \ldots, a_n ) \in (\mathbb {F}_{q^c}^{*})^n $ such that $ C_2 = a \cdot C_1 $; i.e.,
	 \[
	 C_2 = \left\{(a_1 c_1, a_2 c_2, \ldots , a_n c_n)|(c_1, c_2,\ldots, c_n)\in C_1\right\}.
	 \] 
	 Denote by $ C^\bot $ the dual of $ C $. The code $ C $ is called \textbf{self-dual} (resp. \textbf{self-orthogonal}) if $ C =  C ^\bot$ (resp. $ C \subseteq  C ^\bot$).  
\end{defn}

\begin{prop}[\cite{Stichtenoth}]\label{prop:equivalent}
	Suppose $G_{1}$ and $G_{2}$ are divisors with $G_{1}\sim G_{2}$
	and $\supp G_{1}\cap \supp D=\supp G_{2}\cap \supp D=\emptyset$, then $C_{\mathscr{L}}(  D, G_{1})$
	and $C_{\mathscr{L}}(  D, G_{2})$ are equivalent. 
\end{prop}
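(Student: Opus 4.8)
The plan is to convert the linear equivalence $G_1\sim G_2$ into an explicit diagonal scaling relating the two evaluation codes. First I would choose a function $h\in F^{*}$ realizing the equivalence, namely one with $\Div(h)=G_1-G_2$; such an $h$ exists precisely because $G_1\sim G_2$. Multiplication by $h$ then gives an $\mathbb{F}_{q^c}$-linear isomorphism $\mathscr{L}(G_1)\to\mathscr{L}(G_2)$: for $f\in\mathscr{L}(G_1)$ one has $\Div(f)\geqslant -G_1$, hence $\Div(hf)=\Div(h)+\Div(f)\geqslant (G_1-G_2)-G_1=-G_2$, so $hf\in\mathscr{L}(G_2)$, and multiplication by $h^{-1}$ supplies the inverse map.

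The decisive step is to verify that $h$ is a unit at each evaluation place. Writing $D=P_1+\cdots+P_n$ as in the introduction, I note that $\supp(\Div(h))\subseteq\supp(G_1)\cup\supp(G_2)$, and by hypothesis both $\supp(G_1)$ and $\supp(G_2)$ are disjoint from $\supp(D)$. Therefore $v_{P_j}(h)=0$ for every $j$, so $a_j:=h(P_j)$ is a well-defined nonzero element of $\mathbb{F}_{q^c}$, and $a:=(a_1,\ldots,a_n)\in(\mathbb{F}_{q^c}^{*})^{n}$.

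It then remains to compare the codes componentwise. Since taking residues at a place where $h$ is a unit is multiplicative, for any $f\in\mathscr{L}(G_1)$ the image $hf$ evaluates as $(hf)(P_j)=h(P_j)\,f(P_j)=a_j\,f(P_j)$ for each $j$. As $f$ ranges over $\mathscr{L}(G_1)$, the product $hf$ ranges over all of $\mathscr{L}(G_2)$ by the isomorphism above, so every codeword of $C_{\mathscr{L}}(D,G_2)$ has the form $(a_1 f(P_1),\ldots,a_n f(P_n))$ with $f\in\mathscr{L}(G_1)$. This is exactly the assertion $C_{\mathscr{L}}(D,G_2)=a\cdot C_{\mathscr{L}}(D,G_1)$, so the two codes are equivalent in the sense of the preceding definition.

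I do not anticipate a genuine obstacle, as the argument is a standard manipulation of Riemann-Roch spaces under linear equivalence. The only point needing care is the support hypothesis: it is precisely what guarantees that $h$ neither vanishes nor has a pole at any $P_j$, and hence that the scaling vector $a$ lies in $(\mathbb{F}_{q^c}^{*})^{n}$ with all entries nonzero. Without this disjointness one could have $h(P_j)=0$ or a pole, and the componentwise correspondence between the two codes would fail.
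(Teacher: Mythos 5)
Your proof is correct and is precisely the standard argument behind this result; the paper itself gives no proof, simply citing Stichtenoth, and the textbook proof it points to is the same diagonal-scaling construction you describe: pick $h$ with $\Div(h)=G_1-G_2$, use the support hypothesis to see $v_{P_j}(h)=0$, and observe that $f\mapsto hf$ carries $\mathscr{L}(G_1)$ bijectively onto $\mathscr{L}(G_2)$ while multiplying each coordinate of the evaluation vector by $a_j=h(P_j)\in\mathbb{F}_{q^c}^{*}$. Nothing is missing.
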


\begin{prop}\label{prop:equivalent_vst}
	Suppose that $s+ q^a r = \widehat{s}+\sigma (q^c-1) $ with $ 0 \leqslant \widehat{s}< q^c-1 $, and  $ t -q^{a-1}N_b -(q^c -1) \sigma = \widehat{t} + N_c \lambda $ with $ 0 \leqslant t < N_c $. Let $\widehat{v}:= v + (q^c-1)\lambda -r  $. Then the code $C_{v, r,s,t}$ is equivalent to  $  C_{\widehat{v} ,0,  \widehat s , \widehat t  } $.
\end{prop}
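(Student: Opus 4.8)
The plan is to derive this code equivalence directly from the divisor equivalence already established in Lemma~\ref{lem:div_equi}, combined with the general principle (Proposition~\ref{prop:equivalent}, quoted from \cite{Stichtenoth}) that two linearly equivalent divisors whose supports avoid $\supp D$ give rise to equivalent codes. In effect almost all of the work has already been done in the preceding lemma; what remains is to package it correctly and to verify the one hypothesis of Proposition~\ref{prop:equivalent} that is not automatic, namely the support condition.

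First I would set $G_{1} := vP_1 + rP_0 + sQ + tV$ and $G_{2} := \widehat{v}P_1 + \widehat{s}Q + \widehat{t}V$, so that by the definition of the codes $C_{v,r,s,t} = C_{\mathscr{L}}(D,G_{1})$ and $C_{\widehat{v},0,\widehat{s},\widehat{t}} = C_{\mathscr{L}}(D,G_{2})$. Under the stated hypotheses $s + q^{a}r = \widehat{s} + \sigma(q^{c}-1)$ and $t - q^{a-1}N_b - (q^{c}-1)\sigma = \widehat{t} + N_{c}\lambda$, with $\widehat{v} = v + (q^{c}-1)\lambda - r$, Lemma~\ref{lem:div_equi} delivers precisely $G_{1} \sim G_{2}$. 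The additional normalizing inequalities $0 \leqslant \widehat{s} < q^{c}-1$ and $0 \leqslant \widehat{t} < N_{c}$ in the statement only fix a canonical representative for $(\widehat{s},\widehat{t})$ and play no role in the equivalence itself.

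Next I would check the support hypothesis. By Proposition~\ref{prop:valuation}, the divisor $D = \sum D_{\alpha,\beta}$ is supported at the degree-one places with $x \equiv \alpha \in \mathbb{F}_{q^{c}}^{*}$, whereas each of $P_1$, $P_0$, $Q$, $V$ lies over $x = 0$ or $x = \infty$. Hence $\supp G_{1}$ and $\supp G_{2}$ are both contained in $\{P_1, P_0, Q, V\}$ and are therefore disjoint from $\supp D$. Applying Proposition~\ref{prop:equivalent} to the pair $G_{1} \sim G_{2}$ then shows that $C_{\mathscr{L}}(D,G_{1})$ and $C_{\mathscr{L}}(D,G_{2})$ are equivalent, which is exactly the claimed equivalence $C_{v,r,s,t} \sim C_{\widehat{v},0,\widehat{s},\widehat{t}}$.

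There is essentially no serious obstacle remaining: the genuinely computational content---verifying that $\Div\!\left(x^{(q^{c}-1)\lambda - r} z^{q^{a}\lambda - \sigma} w^{-\lambda}\right)$ transports $G_{1}$ to $G_{2}$---was already carried out explicitly in the proof of Lemma~\ref{lem:div_equi} using the divisor formulas of Proposition~\ref{prop:div}. The only points requiring care are bookkeeping ones: confirming that the coefficient of $P_0$ in the representative $G_{2}$ is indeed $0$, so that the target is of type $C_{\widehat{v},0,\widehat{s},\widehat{t}}$, and that the shifted first coefficient $\widehat{v} = v + (q^{c}-1)\lambda - r$ agrees with the definition in the statement.
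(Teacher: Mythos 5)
Your proposal is correct and follows exactly the paper's own argument: the paper also proves this proposition by combining the divisor equivalence of Lemma~\ref{lem:div_equi} with Proposition~\ref{prop:equivalent}. You simply spell out the support-disjointness check, which the paper leaves implicit.
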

\begin{proof}
	
	It follows easily from Lemma \ref{lem:div_equi} and Proposition \ref{prop:equivalent}.  
\end{proof}

 We use the following lemma to calculate the dual of $ C_{v,r,s,t} $.
 \begin{lem}[\cite{Stichtenoth}]\label{lem:dual}
 	Let $\tau $ be an element of the function field of $ \mathcal{X}$ such that $ v_{P_i}(\tau )=1$ for all rational places $ P_i $ contained in the divisor $D$. Then the dual of $ C_{\mathscr{L}}(  D, G)$ is
 	
 	\[ C_{\mathscr{L}}(  D, G)^{\bot} =C_{\mathscr{L}}(  D,   D-G+\Div(d\tau ) - \Div \tau ).
 	\] 
 \end{lem}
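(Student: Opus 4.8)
The plan is to reduce the statement to the general duality theorem for algebraic geometric codes and then to exhibit one concrete Weil differential with the prescribed local behaviour. First I would recall the two facts from \cite{Stichtenoth} that carry the argument. The first is that the dual of a functional code is a differential code, $C_{\mathscr{L}}(D,G)^{\bot}=C_{\Omega}(D,G)$, where $C_{\Omega}(D,G)$ is the code obtained by taking residues at $P_1,\ldots,P_n$ of the Weil differentials in $\Omega(G-D)$. The second is the representation formula: for \emph{any} Weil differential $\eta$ satisfying $v_{P_i}(\eta)=-1$ and $\res_{P_i}(\eta)=1$ at every rational place $P_i$ in $\supp D$, one has $C_{\Omega}(D,G)=C_{\mathscr{L}}(D,\,D-G+\Div(\eta))$. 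Granting these, the whole problem becomes purely local: it suffices to produce a single differential whose pole order and residue at each $P_i$ are exactly $-1$ and $1$, and then to identify its divisor.

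The candidate is the logarithmic differential $\eta:=d\tau/\tau$. The hypothesis $v_{P_i}(\tau)=1$ says precisely that $\tau$ is a local uniformizer at each $P_i$. Over the perfect constant field $\mathbb{F}_{q^c}$ a uniformizer is a separating element, so $d\tau$ is a local generator of the module of differentials and, by the very definition of the valuation of a differential with respect to the uniformizer $\tau$, we get $v_{P_i}(d\tau)=0$. Hence $v_{P_i}(\eta)=v_{P_i}(d\tau)-v_{P_i}(\tau)=-1$, as required. For the residue I would expand in the uniformizer $t=\tau$: locally $\eta=t^{-1}\,dt$, whose residue is the coefficient of $t^{-1}$, namely $\res_{P_i}(\eta)=1$. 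Because $\tau$ is simultaneously a uniformizer at every place of $D$, both local conditions hold at all the $P_i$ at once.

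It remains to compute the canonical divisor of $\eta$. Writing $\eta=(1/\tau)\,d\tau$ and using the additivity of the differential divisor under multiplication by a function, I obtain $\Div(\eta)=\Div(1/\tau)+\Div(d\tau)=\Div(d\tau)-\Div(\tau)$. Substituting this into the representation formula from the first step gives
\[
C_{\mathscr{L}}(D,G)^{\bot}=C_{\mathscr{L}}\bigl(D,\,D-G+\Div(d\tau)-\Div(\tau)\bigr),
\]
which is exactly the asserted identity.

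The only genuinely delicate point is the local analysis at the places $P_i$: one must know that a uniformizer is a separating variable, so that $d\tau\neq 0$ and $v_{P_i}(d\tau)=0$, and that the residue of $dt/t$ in a uniformizer $t$ equals $1$. Both are standard facts over a perfect constant field. Once the abstract duality theorem of \cite{Stichtenoth} is invoked, everything else is routine bookkeeping with valuations and with the additivity of $\Div$ on differentials.
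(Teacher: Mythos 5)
Your proof is correct. The paper gives no proof of this lemma at all --- it is quoted verbatim from \cite{Stichtenoth} --- and your argument (duality $C_{\mathscr{L}}(D,G)^{\bot}=C_{\Omega}(D,G)$, the representation of $C_{\Omega}(D,G)$ via a Weil differential with $v_{P_i}(\eta)=-1$ and $\res_{P_i}(\eta)=1$, and the verification that $\eta=d\tau/\tau$ satisfies these conditions because $\tau$ is a simultaneous uniformizer) is exactly the standard proof in that reference, so you have faithfully reconstructed the cited argument.
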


\begin{prop}\label{prop:dual}
	\label{prop:The-dual-of}The dual of $C_{v,r,s,t}$ is
		\[ C_{v,r,s,t}^{\bot}=C_{-1-v,-1-r,A-s,B- t}, \]
	where $A= q^{c+a} +q^{c}-q^{a}-2 $, and $ B=(q^{a-1}-1)N_c  -1  $.
\end{prop}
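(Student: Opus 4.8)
The plan is to apply Lemma \ref{lem:dual} directly, taking advantage of the self-orthogonal structure typical of one-point-type Hermitian codes. To invoke the lemma I first need a differential whose behaviour at the places $D_{\alpha,\beta}$ and at $P_1, P_0, Q, V$ can be controlled. The natural choice is $\tau := x$, since the places in $\supp D$ all satisfy $x \equiv \alpha \neq 0$ and $x-\alpha$ is a local uniformizer there; more convenient still is a function with a simple zero at every $D_{\alpha,\beta}$. By Proposition \ref{prop:valuation} we know $\Div(D) = q^{c-1}(q^c-1)$ split over the places $D_{\alpha,\beta}$, and I would use the separating variable $x$ together with the standard fact (see \cite{Stichtenoth}) that $\Div(dx) = (2g-2)$-worth of poles and zeros concentrated on $P_1, P_0, Q, V$. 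The key computation is to evaluate $\Div(dx)$ explicitly in terms of the four distinguished divisors. I expect this to take the shape
\[
\Div(dx) = a_1 P_1 + a_0 P_0 + a_Q Q + a_V V - D',
\]
where $D'$ accounts for the ramification contribution along $\supp D$; the coefficients $a_1, a_0, a_Q, a_V$ are read off from the valuations of $x$ and from the different of the extension, using the genus formula in Proposition \ref{prop:valuation}(1) as a consistency check on $\deg \Div(dx) = 2g-2$.

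\textbf{Assembling the dual divisor.}
Once $\Div(dx)$ is known, Lemma \ref{lem:dual} with $\tau$ having a simple zero along each place of $\supp D$ gives
\[
C_{\mathscr{L}}(D,G)^{\bot} = C_{\mathscr{L}}\bigl(D, \; D - G + \Div(d\tau) - \Div \tau\bigr).
\]
I would choose $\tau$ so that $\Div(d\tau) - \Div\tau - D$ is supported entirely on $P_1, P_0, Q, V$; the term $D$ on the right cancels the poles that $\Div(d\tau)$ develops along $\supp D$, which is exactly the mechanism that keeps the dual a four-point code of the same type. Writing $D - G + \Div(d\tau) - \Div\tau = v' P_1 + r' P_0 + s' Q + t' V$, the coefficients are obtained by subtracting $G = vP_1 + rP_0 + sQ + tV$ from the single effective divisor on the right. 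The claimed answer $v' = -1-v$, $r' = -1-r$, $s' = A - s$, $t' = B - t$ then pins down the four constants: the $P_1$ and $P_0$ coefficients of $D - G + \Div(d\tau) - \Div\tau$ must be $-1$ in front of $v$ and $r$ (matching the $-1-v$, $-1-r$ pattern), while the $Q$ and $V$ coefficients must equal $A = q^{c+a}+q^c-q^a-2$ and $B = (q^{a-1}-1)N_c - 1$ respectively. These are precisely the numbers I would verify by comparing against $\deg(G^{\bot}) = R - \deg(G)$ with $R = n + 2g - 2$.

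\textbf{Main obstacle and verification.}
The hard part will be the explicit determination of $\Div(dx)$, and in particular correctly accounting for the wild ramification: since the tower is an Artin-Schreier type construction over characteristic $p$, the different exponents at $P_1, P_0, Q, V$ are not simply $e-1$ but involve the higher ramification filtration, so the coefficients $a_1, a_0, a_Q, a_V$ require care rather than a naive Hurwitz count. I would cross-check the final coefficients $A$ and $B$ in two independent ways: first by confirming the degree identity $\deg(C_{v,r,s,t}^{\bot}\text{'s divisor}) = \deg D + 2g - 2 - \deg G$, which forces
\[
(-1-v) + (q^{a-1}-1)(-1-r) + q^{b-1}(A-s) + (q-1)(B-t) = n + 2g - 2 - \deg G,
\]
and second by checking the special case $a = b = 1$ against the known Hermitian self-duality formulas of \cite{Park, Stichtenoth}, where the answer must reduce to the classical one. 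Verifying that these two constraints are simultaneously satisfied by the stated $A$ and $B$ is the cleanest route to confidence in the formula.
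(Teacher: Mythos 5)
Your overall strategy coincides with the paper's: apply Lemma \ref{lem:dual} to a suitable $\tau$ and compute $\Div(d\tau/\tau)$. But there are two concrete gaps. First, the choice of $\tau$. Your opening suggestion $\tau:=x$ violates the hypothesis of Lemma \ref{lem:dual}: at each $D_{\alpha,\beta}$ one has $x\equiv\alpha\neq 0$, so $v_{D_{\alpha,\beta}}(x)=0$, not $1$. You acknowledge that you really need a function with a simple zero at every place of $\supp D$, but you never produce one; the proof hinges on the explicit choice $\tau=\prod_{\alpha\in\mathbb{F}_{q^c}}(x-\alpha)=x^{q^c}-x$, whose divisor is $\Div(\tau)=P+q^{a-1}N_bV+D-q^{c+a}Q$. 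Relatedly, your picture of the cancellation mechanism is wrong: $\Div(d\tau)=\Div(-dx)$ has \emph{no} component along $\supp D$, because the places $D_{\alpha,\beta}$ are unramified over $K(x)$ and do not appear in the different. The $-D$ that cancels the $+D$ in $D-G+\Div(d\tau)-\Div(\tau)$ comes from $-\Div(\tau)$, since $\tau$ vanishes simply at each $D_{\alpha,\beta}$ — not from "poles that $\Div(d\tau)$ develops along $\supp D$." This same $-\Div(\tau)$ also contributes the $-P$ giving the coefficients $-1-v$ and $-1-r$.

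Second, you never actually compute $A$ and $B$. The paper gets them by plugging in the explicit different
\[
\Diff(F/K(x))=(q^{c}+q^{a}-2)Q+\left((q^{a-1}-1)N_c+(q^{a-1}N_b-1)\right)V
\]
from \cite{Bassa} into $\Div(dx)=-2\Div_\infty(x)+\Diff(F/K(x))$ and then subtracting $\Div(\tau)$. Your proposed verification — the degree identity $\deg(D-G+\Div(\eta))=n+2g-2-\deg G$ together with the specialization $a=b=1$ — is a sanity check, not a derivation: the degree identity fixes only the single linear combination $q^{b-1}A+(q-1)B$ (plus the already-known $P$-coefficients), so it cannot determine $A$ and $B$ separately for general $a,b$. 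You correctly flag that the wild ramification is the hard part, but the proof is incomplete until that different is actually written down and the $Q$- and $V$-coefficients are extracted from it.
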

\begin{proof}
	Consider the element
	\[
	\tau :=\prod_{\alpha\in\mathbb{F}_{q^c}}(x-\alpha)=x^{q^{c}}-x.
	\]
    Then $\tau $ is a prime element for all places $ D_{\alpha, \beta} $, and its  divisor is 
	\begin{align*}
		\Div( \tau ) & =  \Div_{0}(x)+D-q^{c}\Div_{\infty}(x)\\
		& =  P  +q^{a-1}N_b V+D-q^{c+a}Q.
	\end{align*}
	It follows from \cite{Bassa} that
	\[ \Diff(F/K(x))=(q^{c}+q^{a}-2)Q+\left( (q^{a-1}-1)N_c +(q^{a-1} N_b -1) \right)V . \]
	So the divisor of  $ d\tau  $ is
	\begin{align*}
		\Div( d\tau ) & =  \Div( -dx)=-2\Div_{\infty}(x)+\Diff(F/K(x))\\
		& =  -2 q^a Q + (q^{c}+q^{a}-2)Q+\left( (q^{a-1}-1)N_c +(q^{a-1} N_b -1) \right)V \\
		& =  (q^{c}-q^{a}-2)Q+\left( (q^{a-1}-1)N_c +(q^{a-1} N_b -1) \right) V.
	\end{align*}
  Let $ \eta := d\tau /\tau  $ be a Weil differential. Set $ A:= q^{c+a} +q^{c}-q^{a}-2 $, and $ B:=(q^{a-1}-1)N_c  -1  $. The divisor of $ \eta $ is
	\begin{align*}
	\Div (\eta) &= \Div(d\tau )-\Div(\tau )\\
		& =  (q^{c}-q^{a}-2)Q+\left( (q^{a-1}-1)N_c +(q^{a-1} N_b -1) \right) V
		\\
		& \quad   -P  -q^{a-1}N_bV-D+q^{c+a}Q \\
		& = -P  +\left( (q^{a-1}-1)N_c  -1) \right) V-D+(q^{c+a} +q^{c}-q^{a}-2 )Q\\
		&   = -P  -D + A Q + B V.
	\end{align*}
	By Lemma \ref{lem:dual} the dual  of $ C_{v,r,s,t} $ is 
	\begin{align*}
	    C_{v,r,s,t}^{\bot}   &=C_{\mathscr{L}}\left(D,D -v P_1- r P_0 -s Q -tV +\Div(\eta)\right)\\
		& =  C_\mathscr{L}\left(  D,(-1-v)P_1+(-1-r )P_0   + (A-s) Q+ \left( B-t \right) V \right)\\
		& =  C_{-1-v,-1-r,A-s,B- t} .
	\end{align*}

	\end{proof}

\begin{prop}\label{prop:dimension}
	{{Suppose that}} {$0 \leqslant \deg(G)<R $. Then the following holds:}\end{prop}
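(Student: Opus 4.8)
The plan is to reduce the whole computation to the lattice-point counts already established for the sets $\Omega$. The starting point is the general dimension formula for algebraic geometric codes, Equation (\ref{eq:dim}), which gives $\dim C_{v,r,s,t} = \dim\mathscr{L}(G) - \dim\mathscr{L}(G-D)$ with $G = vP_1 + rP_0 + sQ + tV$. The first term is immediate: the basis constructed in Proposition \ref{prop:basis_all} shows $\dim\mathscr{L}(G) = \#\Omega_{v,r,s,t}$, a quantity already evaluated through Proposition \ref{prop:Omega_num} and Lemmas \ref{lem:Omega_main}--\ref{lem:Omega_equi}. Thus the entire problem collapses to understanding $\mathscr{L}(G-D)$.

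For that second term I would exploit the divisor of $\tau = x^{q^c} - x$ computed inside the proof of Proposition \ref{prop:dual}, namely $\Div(\tau) = P + q^{a-1}N_b V + D - q^{c+a}Q$. Since principal divisors vanish in the class group and $P = P_1 + P_0$, this yields the linear equivalence $D \sim q^{c+a}Q - P_1 - P_0 - q^{a-1}N_b V$, and hence $G - D \sim (v+1)P_1 + (r+1)P_0 + (s - q^{c+a})Q + (t + q^{a-1}N_b)V$. Because $\dim\mathscr{L}$ depends only on the divisor class, Proposition \ref{prop:basis_all} applies to this standard-form representative and gives $\dim\mathscr{L}(G-D) = \#\Omega_{v+1,\,r+1,\,s-q^{c+a},\,t+q^{a-1}N_b}$. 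Subtracting the two counts then produces the desired dimension of $C_{v,r,s,t}$.

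To organize the conclusion over the range $0 \leqslant \deg(G) < R$, I would separate the two regimes dictated by $\deg(G-D) = \deg(G) - n$. When $\deg(G) < n$ one has $\deg(G-D) < 0$, so $\mathscr{L}(G-D) = 0$, the shifted lattice set is empty, and $\dim C_{v,r,s,t} = \#\Omega_{v,r,s,t}$ outright; the upper constraint $\deg(G) < R = n + 2g - 2$ is precisely what keeps $\dim\mathscr{L}(G-D)$ strictly below its generic Riemann--Roch value and the code strictly nontrivial. When $n \leqslant \deg(G) < R$ the correction term $\#\Omega_{v+1,\,r+1,\,s-q^{c+a},\,t+q^{a-1}N_b}$ is genuinely present and must be retained.

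The main obstacle I anticipate is legitimizing the count $\#\Omega_{v+1,\,r+1,\,s-q^{c+a},\,t+q^{a-1}N_b}$ for the shifted parameters: several of them, notably $s - q^{c+a}$, are negative or large, so the direct estimate of Lemma \ref{lem:Omega_main}, proved under $v \geqslant v_0$ and $s,t \geqslant 0$, does not apply verbatim. The remedy is to pass through the linear-equivalence reduction of Lemma \ref{lem:Omega_equi} (equivalently Proposition \ref{prop:equivalent_vst}) to normalize the shifted parameters into the admissible window $0 \leqslant \widehat s < q^c - 1$ and $0 \leqslant \widehat t < N_c$ before invoking the counting lemma, and then to verify that the boundary cases $\deg(G) = n$ and $\deg(G)$ approaching $R$ are handled consistently by this normalization.
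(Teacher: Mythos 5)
Your treatment of the range $0 \leqslant \deg(G) < n$ coincides with the paper's: there $\mathscr{L}(G-D)=0$ and Proposition \ref{prop:basis_all} gives $\dim C_{v,r,s,t}=\#\Omega_{v,r,s,t}$ directly. For $n \leqslant \deg(G) \leqslant R$ you take a genuinely different route. The paper never touches $\mathscr{L}(G-D)$: it invokes the duality result (Proposition \ref{prop:dual}), which identifies $C_{v,r,s,t}^{\bot}$ with $C_{-1-v,-1-r,A-s,B-t}$ --- a code whose defining divisor has degree $2g-2+n-\deg(G)\in[0,n)$, so the first-regime formula applies to it --- and concludes $\dim C_{v,r,s,t}=n-\#\Omega^{\bot}_{v,r,s,t}$. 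You instead evaluate $\dim\mathscr{L}(G-D)$ from the linear equivalence $D\sim q^{c+a}Q-P_1-P_0-q^{a-1}N_b V$ furnished by $\Div(x^{q^c}-x)$, then normalize the shifted parameters through Lemma \ref{lem:Omega_equi}; this is a legitimate computation that avoids the differential machinery, and your remedy for the out-of-range parameters is the right one. Two caveats, though. First, your argument produces the expression $\#\Omega_{v,r,s,t}-\#\Omega_{v+1,\,r+1,\,s-q^{c+a},\,t+q^{a-1}N_b}$, which is numerically equal to, but not literally, the asserted formula $n-\#\Omega_{-1-v,-1-r,A-s,B-t}$; to prove the proposition as stated you would still need to pass through Proposition \ref{prop:dual} (or establish the corresponding combinatorial identity between the two lattice counts separately), so the duality step is not actually avoided. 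Second, you omit the minimum-distance assertion $d\geqslant n-v-(q^{a-1}-1)r-q^{b-1}s-(q-1)t$ entirely; the paper dispatches it in one line via the Goppa bound, and your write-up should at least record that.
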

\begin{enumerate}
	\item The dimension of $C_{r}$ is given by 
	\[
	\dim C_{v,r,s,t}=\begin{cases}
	\#\Omega_{v,r,s,t} &\text{for \ensuremath{0\leqslant \deg(G) <n}},\\
	 n-\#\Omega_{v,r,s,t}^{\bot}  &\text{for $ n \leqslant \deg (G) \leqslant R $}.
	\end{cases}
	\]	
	where   $ \Omega_{v,r,s,t}^{\bot}: =\Omega_{-1-v,-1-r,A-s,B- t} $.
	
	\item The minimum distance $d$ of $C_{r}$ satisfies $d\geqslant n- v- (q^{a-1}-1)r- q^{b-1}s - (q-1)t $.

	\end{enumerate}
\begin{proof}
	
	\begin{enumerate}
		\item For  $ {0\leqslant \deg(G) <n} $,  we have by Proposition \ref{prop:basis_all} and Equation (\ref{eq:dim})  that
		 \[ \dim C_{v, r,s,t}=\dim \mathscr{L}(G)= \#\Omega_{v,r,s,t}  .\]
		For $ n \leqslant \deg (G) \leqslant R $,
		Proposition \ref{prop:The-dual-of} yields 
		\[ \dim C_{v,r,s,t}=n - \dim C^{\bot}_{v,r,s,t}=n -\#\Omega_{v,r,s,t}^{\bot}.  \]
		\item The inequality follows from Goppa bound.
	\end{enumerate}
\end{proof}
By Proposition \ref{prop:basis_all} or Corollary \ref{cor:basis}, one can easily specify a generator matrix for the code $ C_{v,r,s,t} $. We fix an ordering of the set 
\begin{equation}
T:= \left\lbrace { (\alpha, \beta) \in \mathbb{F}_{q^c}^* \times \mathbb{F}_{q^c}^*  \left| \Tr_{b}(\frac{\beta^{q^ a}}{\alpha})+\Tr_{a}(\frac{\beta}{\alpha^{q^b}})=1 \right. }  \right\rbrace .
\end{equation}
For $ (i,j,k) \in \mathbb{Z}^3 $ we define the vector
\begin{equation}
E_{i,j,k} :=\left\lbrace \left. \alpha^i \beta^j \left( 
a^{-1}- \frac{ {\beta^{q^{a}}}}{\alpha} - \frac{ {\beta^{q}}}{\alpha^{q^{a}}} \right)^k  \right| (\alpha, \beta )\in T \right\rbrace \in \mathbb{F}_{q^c}^n.
\end{equation}
	\begin{prop}\label{prop:matrix}
		Suppose that  $ 0 \leqslant \deg (G) < n $. Let $ m:=\dim C_{v,r,s,t} $  and $ (i_\lambda, j_\lambda , k_\lambda) $ with $ 1\leqslant \lambda \leqslant m$ be all elements in $ \Omega'_{v,r,s,t} $. Then the $ m \times n $ matrix whose rows are $ E_{i_1, j_1, k_1}, \ldots, E_{i_m , j_m , k_m } $, is the generator matrix of $ C_{v,r,s,t} $.
	\end{prop}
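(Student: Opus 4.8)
The plan is to realize the asserted matrix as the image of the explicit Riemann--Roch basis of Corollary~\ref{cor:basis} under the evaluation map that defines the AG code, and then to check that this image consists of exactly $m$ linearly independent rows. Since the code is defined by $C_{\mathscr{L}}(D,G) = \{(f(D_{\alpha_1,\beta_1}),\ldots,f(D_{\alpha_n,\beta_n})) \mid f\in\mathscr{L}(G)\}$ with $G = vP_1 + rP_0 + sQ + tV$, the strategy reduces to evaluating a known basis of $\mathscr{L}(G)$ place by place.

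First I would recall the general construction of a generator matrix: if $f_1,\ldots,f_m$ is any basis of $\mathscr{L}(G)$, then the vectors $(f_\lambda(D_{\alpha_1,\beta_1}),\ldots,f_\lambda(D_{\alpha_n,\beta_n}))$, taken over the fixed ordering of $T$, span $C_{v,r,s,t}$ by definition. Because we are in the range $0\leqslant\deg(G)<n$, we have $\deg(G-D)=\deg(G)-n<0$, hence $\mathscr{L}(G-D)=\{0\}$; so the evaluation map $\mathscr{L}(G)\to\mathbb{F}_{q^c}^n$ is injective and the $m$ resulting rows are linearly independent. Combined with $m=\dim C_{v,r,s,t}=\#\Omega'_{v,r,s,t}$ from Proposition~\ref{prop:dimension}, this shows that the evaluations of any basis of $\mathscr{L}(G)$ already constitute a generator matrix.

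Next I would specialize to the explicit basis. By Corollary~\ref{cor:basis} the functions $x^i y^j u^k$ with $(i,j,k)\in\Omega'_{v,r,s,t}$ form a basis of $\mathscr{L}(G)$, so it remains to identify their evaluation vectors with the $E_{i,j,k}$. Fix $(\alpha,\beta)\in T$. By Proposition~\ref{prop:valuation} the place $D_{\alpha,\beta}$ satisfies $x\equiv\alpha$ and $y\equiv\beta \pmod{P_{\alpha,\beta}}$; since $\alpha\in\mathbb{F}_{q^c}^*$ is nonzero, the function $u=a^{-1}-y^{q^a}/x-y^q/x^{q^a}$ is regular at $D_{\alpha,\beta}$ and takes the value $a^{-1}-\beta^{q^a}/\alpha-\beta^q/\alpha^{q^a}$ there. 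Consequently
\[
(x^i y^j u^k)(D_{\alpha,\beta}) = \alpha^i\beta^j\left(a^{-1}-\frac{\beta^{q^a}}{\alpha}-\frac{\beta^q}{\alpha^{q^a}}\right)^{k},
\]
which is exactly the $(\alpha,\beta)$-component of $E_{i,j,k}$. Letting $(\alpha,\beta)$ run over the ordered set $T$ then identifies the evaluation vector of $x^i y^j u^k$ with $E_{i,j,k}\in\mathbb{F}_{q^c}^n$, so the rows $E_{i_1,j_1,k_1},\ldots,E_{i_m,j_m,k_m}$ are precisely the evaluations of a basis of $\mathscr{L}(G)$ and thus form a generator matrix.

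The argument is essentially bookkeeping built on the earlier results, so I do not expect a genuine obstacle. The two points demanding care are verifying that every basis function --- and in particular $u$ --- is regular at each place $D_{\alpha,\beta}$, so that evaluation is well defined (guaranteed by $\supp(G)\cap\supp(D)=\emptyset$ together with $\alpha\neq0$), and invoking $\deg(G)<n$ to ensure the evaluation map is injective so that the $m$ rows are genuinely independent. Both are immediate from Proposition~\ref{prop:valuation}, Proposition~\ref{prop:dimension} and Corollary~\ref{cor:basis}.
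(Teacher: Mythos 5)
Your proposal is correct and follows exactly the route the paper intends: its entire proof is the single citation ``Corollary~\ref{cor:basis}'', and what you have written is simply that argument with the standard bookkeeping (injectivity of evaluation from $\deg(G)<n$, regularity of $u$ at the $D_{\alpha,\beta}$, and the identification of the evaluation vectors with the $E_{i,j,k}$) spelled out. No discrepancy to report.
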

	\begin{proof}
	Corollary \ref{cor:basis}.
	\end{proof}
\begin{example}
	Let us consider the case $ q = 2 $, $c =5 $, $a=3$, and $b=2$, then $n =496 $, and $g=75 $. By Proposition \ref{prop:equivalent_vst} we should only consider the codes $ C_{v,r,s,t} $ with $ r = 0, 0\leqslant s < 31, 0 \leqslant t < 31$. Applying Proposition \ref{prop:dual}, we find that  the dual code of $C_{v,r,s,t} $ is $C_{-1-v, -1-r, 278-s,   92-t} $.
	Using Proposition \ref{prop:dimension},  we can determine the dimension and the Goppa bound for $ C_{v,r,s,t} $. The GV bound is the best lower bound which is known from elementary coding theory. However, its proof is not constructive. It does not provide a simple algebraic algorithm for the construction of good long codes. While our codes $ C_{v,r,s,t} $ can be constructed explicitly by applying Proposition \ref{prop:matrix}, and it turns out that the Goppa bound of $ C_{v,r,s,t} $ improves the GV bound in a certain interval, see Figure \ref{fig:GV_bound}.  For instance, we find that the code $ C_{324,0,0,0}$ is a $ [496,250,\geqslant 172]$-code over $ \mathbb{F}_{32} $, which   oversteps  the GV bound.
			\begin{figure}[H] 	
				\centering
				\begin{tikzpicture}
				\begin{axis}[ymin=0,ymax=496,xmin=0,xmax=496, enlargelimits=false,
				xlabel=dimension,
				ylabel=distance]
				\draw [->](180,180)--(200,200);
				\draw[color=black] node   at (120,180) {GV Bound};
				\draw [->](200,270)--(180,240);
				\draw[color=black] node  at (250,280) {Goppa Bound};
				
				\addplot[red,only marks,mark options={
					fill=red,scale=0.08},mark=o] coordinates {
					(1, 452)
					(1, 453)
					(1, 454)
					(1, 455)
					(1, 456)
					(1, 457)
					(1, 458)
					(1, 459)
					(1, 460)
					(1, 461)
					(1, 462)
					(1, 463)
					(1, 464)
					(1, 465)
					(1, 466)
					(1, 467)
					(1, 468)
					(1, 469)
					(1, 470)
					(1, 471)
					(1, 472)
					(1, 473)
					(1, 474)
					(1, 475)
					(1, 476)
					(1, 477)
					(1, 478)
					(1, 479)
					(1, 480)
					(1, 481)
					(1, 482)
					(1, 483)
					(1, 484)
					(1, 485)
					(1, 486)
					(1, 487)
					(1, 488)
					(1, 489)
					(1, 490)
					(1, 491)
					(1, 492)
					(1, 493)
					(1, 494)
					(1, 495)
					(1, 496)
					(2, 442)
					(2, 443)
					(2, 444)
					(2, 445)
					(2, 446)
					(2, 447)
					(2, 448)
					(2, 449)
					(2, 450)
					(2, 451)
					(2, 452)
					(2, 453)
					(2, 454)
					(2, 455)
					(2, 456)
					(2, 457)
					(2, 458)
					(2, 459)
					(2, 460)
					(2, 461)
					(2, 462)
					(2, 463)
					(2, 464)
					(2, 465)
					(2, 466)
					(2, 467)
					(2, 468)
					(2, 469)
					(2, 470)
					(2, 471)
					(2, 472)
					(2, 473)
					(2, 474)
					(2, 475)
					(2, 476)
					(2, 477)
					(2, 478)
					(2, 479)
					(2, 480)
					(3, 439)
					(3, 440)
					(3, 441)
					(3, 442)
					(3, 443)
					(3, 444)
					(3, 445)
					(3, 446)
					(3, 447)
					(3, 448)
					(3, 449)
					(3, 450)
					(3, 451)
					(3, 452)
					(3, 453)
					(3, 454)
					(3, 455)
					(3, 456)
					(3, 457)
					(3, 458)
					(3, 459)
					(3, 460)
					(3, 461)
					(3, 462)
					(3, 463)
					(3, 464)
					(3, 465)
					(3, 466)
					(3, 467)
					(3, 468)
					(3, 469)
					(3, 470)
					(3, 471)
					(3, 472)
					(3, 473)
					(3, 474)
					(3, 475)
					(3, 476)
					(4, 433)
					(4, 434)
					(4, 435)
					(4, 436)
					(4, 437)
					(4, 438)
					(4, 439)
					(4, 440)
					(4, 441)
					(4, 442)
					(4, 443)
					(4, 444)
					(4, 445)
					(4, 446)
					(4, 447)
					(4, 448)
					(4, 449)
					(4, 450)
					(4, 451)
					(4, 452)
					(4, 453)
					(4, 454)
					(4, 455)
					(4, 456)
					(4, 457)
					(4, 458)
					(4, 459)
					(4, 460)
					(4, 461)
					(4, 462)
					(4, 463)
					(4, 464)
					(5, 430)
					(5, 431)
					(5, 432)
					(5, 433)
					(5, 434)
					(5, 435)
					(5, 436)
					(5, 437)
					(5, 438)
					(5, 439)
					(5, 440)
					(5, 441)
					(5, 442)
					(5, 443)
					(5, 444)
					(5, 445)
					(5, 446)
					(5, 447)
					(5, 448)
					(5, 449)
					(5, 450)
					(5, 451)
					(5, 452)
					(5, 453)
					(5, 454)
					(5, 455)
					(5, 456)
					(5, 457)
					(5, 458)
					(5, 459)
					(5, 460)
					(5, 461)
					(6, 429)
					(6, 430)
					(6, 431)
					(6, 432)
					(6, 433)
					(6, 434)
					(6, 435)
					(6, 436)
					(6, 437)
					(6, 438)
					(6, 439)
					(6, 440)
					(6, 441)
					(6, 442)
					(6, 443)
					(6, 444)
					(6, 445)
					(6, 446)
					(6, 447)
					(6, 448)
					(6, 449)
					(6, 450)
					(6, 451)
					(6, 452)
					(6, 453)
					(6, 454)
					(6, 455)
					(6, 456)
					(6, 457)
					(7, 427)
					(7, 428)
					(7, 429)
					(7, 430)
					(7, 431)
					(7, 432)
					(7, 433)
					(7, 434)
					(7, 435)
					(7, 436)
					(7, 437)
					(7, 438)
					(7, 439)
					(7, 440)
					(7, 441)
					(7, 442)
					(7, 443)
					(7, 444)
					(7, 445)
					(7, 446)
					(7, 447)
					(7, 448)
					(7, 449)
					(7, 450)
					(7, 451)
					(7, 452)
					(7, 453)
					(8, 424)
					(8, 425)
					(8, 426)
					(8, 427)
					(8, 428)
					(8, 429)
					(8, 430)
					(8, 431)
					(8, 432)
					(8, 433)
					(8, 434)
					(8, 435)
					(8, 436)
					(8, 437)
					(8, 438)
					(8, 439)
					(8, 440)
					(8, 441)
					(8, 442)
					(8, 443)
					(8, 444)
					(8, 445)
					(8, 446)
					(8, 447)
					(8, 448)
					(9, 421)
					(9, 422)
					(9, 423)
					(9, 424)
					(9, 425)
					(9, 426)
					(9, 427)
					(9, 428)
					(9, 429)
					(9, 430)
					(9, 431)
					(9, 432)
					(9, 433)
					(9, 434)
					(9, 435)
					(9, 436)
					(9, 437)
					(9, 438)
					(9, 439)
					(9, 440)
					(9, 441)
					(9, 442)
					(9, 443)
					(9, 444)
					(9, 445)
					(10, 420)
					(10, 421)
					(10, 422)
					(10, 423)
					(10, 424)
					(10, 425)
					(10, 426)
					(10, 427)
					(10, 428)
					(10, 429)
					(10, 430)
					(10, 431)
					(10, 432)
					(10, 433)
					(10, 434)
					(10, 435)
					(10, 436)
					(10, 437)
					(10, 438)
					(10, 439)
					(10, 440)
					(10, 441)
					(10, 442)
					(10, 443)
					(11, 418)
					(11, 419)
					(11, 420)
					(11, 421)
					(11, 422)
					(11, 423)
					(11, 424)
					(11, 425)
					(11, 426)
					(11, 427)
					(11, 428)
					(11, 429)
					(11, 430)
					(11, 431)
					(11, 432)
					(11, 433)
					(11, 434)
					(11, 435)
					(11, 436)
					(11, 437)
					(11, 438)
					(11, 439)
					(11, 440)
					(11, 441)
					(12, 417)
					(12, 418)
					(12, 419)
					(12, 420)
					(12, 421)
					(12, 422)
					(12, 423)
					(12, 424)
					(12, 425)
					(12, 426)
					(12, 427)
					(12, 428)
					(12, 429)
					(12, 430)
					(12, 431)
					(12, 432)
					(12, 433)
					(12, 434)
					(12, 435)
					(12, 436)
					(12, 437)
					(13, 415)
					(13, 416)
					(13, 417)
					(13, 418)
					(13, 419)
					(13, 420)
					(13, 421)
					(13, 422)
					(13, 423)
					(13, 424)
					(13, 425)
					(13, 426)
					(13, 427)
					(13, 428)
					(13, 429)
					(13, 430)
					(13, 431)
					(13, 432)
					(13, 433)
					(13, 434)
					(14, 412)
					(14, 413)
					(14, 414)
					(14, 415)
					(14, 416)
					(14, 417)
					(14, 418)
					(14, 419)
					(14, 420)
					(14, 421)
					(14, 422)
					(14, 423)
					(14, 424)
					(14, 425)
					(14, 426)
					(14, 427)
					(14, 428)
					(14, 429)
					(14, 430)
					(14, 431)
					(14, 432)
					(15, 411)
					(15, 412)
					(15, 413)
					(15, 414)
					(15, 415)
					(15, 416)
					(15, 417)
					(15, 418)
					(15, 419)
					(15, 420)
					(15, 421)
					(15, 422)
					(15, 423)
					(15, 424)
					(15, 425)
					(15, 426)
					(15, 427)
					(15, 428)
					(15, 429)
					(16, 409)
					(16, 410)
					(16, 411)
					(16, 412)
					(16, 413)
					(16, 414)
					(16, 415)
					(16, 416)
					(16, 417)
					(16, 418)
					(16, 419)
					(16, 420)
					(16, 421)
					(16, 422)
					(16, 423)
					(16, 424)
					(16, 425)
					(16, 426)
					(16, 427)
					(17, 408)
					(17, 409)
					(17, 410)
					(17, 411)
					(17, 412)
					(17, 413)
					(17, 414)
					(17, 415)
					(17, 416)
					(17, 417)
					(17, 418)
					(17, 419)
					(17, 420)
					(17, 421)
					(17, 422)
					(17, 423)
					(17, 424)
					(17, 425)
					(18, 407)
					(18, 408)
					(18, 409)
					(18, 410)
					(18, 411)
					(18, 412)
					(18, 413)
					(18, 414)
					(18, 415)
					(18, 416)
					(18, 417)
					(18, 418)
					(18, 419)
					(18, 420)
					(18, 421)
					(18, 422)
					(18, 423)
					(19, 406)
					(19, 407)
					(19, 408)
					(19, 409)
					(19, 410)
					(19, 411)
					(19, 412)
					(19, 413)
					(19, 414)
					(19, 415)
					(19, 416)
					(19, 417)
					(19, 418)
					(19, 419)
					(19, 420)
					(19, 421)
					(20, 405)
					(20, 406)
					(20, 407)
					(20, 408)
					(20, 409)
					(20, 410)
					(20, 411)
					(20, 412)
					(20, 413)
					(20, 414)
					(20, 415)
					(20, 416)
					(20, 417)
					(20, 418)
					(20, 419)
					(21, 403)
					(21, 404)
					(21, 405)
					(21, 406)
					(21, 407)
					(21, 408)
					(21, 409)
					(21, 410)
					(21, 411)
					(21, 412)
					(21, 413)
					(21, 414)
					(21, 415)
					(21, 416)
					(21, 417)
					(22, 402)
					(22, 403)
					(22, 404)
					(22, 405)
					(22, 406)
					(22, 407)
					(22, 408)
					(22, 409)
					(22, 410)
					(22, 411)
					(22, 412)
					(22, 413)
					(22, 414)
					(22, 415)
					(23, 400)
					(23, 401)
					(23, 402)
					(23, 403)
					(23, 404)
					(23, 405)
					(23, 406)
					(23, 407)
					(23, 408)
					(23, 409)
					(23, 410)
					(23, 411)
					(23, 412)
					(23, 413)
					(24, 399)
					(24, 400)
					(24, 401)
					(24, 402)
					(24, 403)
					(24, 404)
					(24, 405)
					(24, 406)
					(24, 407)
					(24, 408)
					(24, 409)
					(24, 410)
					(24, 411)
					(24, 412)
					(25, 398)
					(25, 399)
					(25, 400)
					(25, 401)
					(25, 402)
					(25, 403)
					(25, 404)
					(25, 405)
					(25, 406)
					(25, 407)
					(25, 408)
					(25, 409)
					(25, 410)
					(26, 397)
					(26, 398)
					(26, 399)
					(26, 400)
					(26, 401)
					(26, 402)
					(26, 403)
					(26, 404)
					(26, 405)
					(26, 406)
					(26, 407)
					(26, 408)
					(27, 396)
					(27, 397)
					(27, 398)
					(27, 399)
					(27, 400)
					(27, 401)
					(27, 402)
					(27, 403)
					(27, 404)
					(27, 405)
					(27, 406)
					(27, 407)
					(28, 395)
					(28, 396)
					(28, 397)
					(28, 398)
					(28, 399)
					(28, 400)
					(28, 401)
					(28, 402)
					(28, 403)
					(28, 404)
					(28, 405)
					(29, 394)
					(29, 395)
					(29, 396)
					(29, 397)
					(29, 398)
					(29, 399)
					(29, 400)
					(29, 401)
					(29, 402)
					(29, 403)
					(29, 404)
					(30, 393)
					(30, 394)
					(30, 395)
					(30, 396)
					(30, 397)
					(30, 398)
					(30, 399)
					(30, 400)
					(30, 401)
					(30, 402)
					(30, 403)
					(31, 392)
					(31, 393)
					(31, 394)
					(31, 395)
					(31, 396)
					(31, 397)
					(31, 398)
					(31, 399)
					(31, 400)
					(31, 401)
					(32, 390)
					(32, 391)
					(32, 392)
					(32, 393)
					(32, 394)
					(32, 395)
					(32, 396)
					(32, 397)
					(32, 398)
					(32, 399)
					(33, 389)
					(33, 390)
					(33, 391)
					(33, 392)
					(33, 393)
					(33, 394)
					(33, 395)
					(33, 396)
					(33, 397)
					(34, 388)
					(34, 389)
					(34, 390)
					(34, 391)
					(34, 392)
					(34, 393)
					(34, 394)
					(34, 395)
					(34, 396)
					(35, 387)
					(35, 388)
					(35, 389)
					(35, 390)
					(35, 391)
					(35, 392)
					(35, 393)
					(35, 394)
					(36, 386)
					(36, 387)
					(36, 388)
					(36, 389)
					(36, 390)
					(36, 391)
					(36, 392)
					(36, 393)
					(37, 385)
					(37, 386)
					(37, 387)
					(37, 388)
					(37, 389)
					(37, 390)
					(37, 391)
					(37, 392)
					(38, 384)
					(38, 385)
					(38, 386)
					(38, 387)
					(38, 388)
					(38, 389)
					(38, 390)
					(38, 391)
					(39, 383)
					(39, 384)
					(39, 385)
					(39, 386)
					(39, 387)
					(39, 388)
					(39, 389)
					(40, 382)
					(40, 383)
					(40, 384)
					(40, 385)
					(40, 386)
					(40, 387)
					(40, 388)
					(41, 381)
					(41, 382)
					(41, 383)
					(41, 384)
					(41, 385)
					(41, 386)
					(41, 387)
					(42, 380)
					(42, 381)
					(42, 382)
					(42, 383)
					(42, 384)
					(42, 385)
					(43, 379)
					(43, 380)
					(43, 381)
					(43, 382)
					(43, 383)
					(43, 384)
					(44, 378)
					(44, 379)
					(44, 380)
					(44, 381)
					(44, 382)
					(44, 383)
					(45, 377)
					(45, 378)
					(45, 379)
					(45, 380)
					(45, 381)
					(46, 376)
					(46, 377)
					(46, 378)
					(46, 379)
					(46, 380)
					(47, 375)
					(47, 376)
					(47, 377)
					(47, 378)
					(48, 374)
					(48, 375)
					(48, 376)
					(48, 377)
					(49, 373)
					(49, 374)
					(49, 375)
					(49, 376)
					(50, 372)
					(50, 373)
					(50, 374)
					(50, 375)
					(51, 371)
					(51, 372)
					(51, 373)
					(51, 374)
					(52, 370)
					(52, 371)
					(52, 372)
					(52, 373)
					(53, 369)
					(53, 370)
					(53, 371)
					(53, 372)
					(54, 368)
					(54, 369)
					(54, 370)
					(54, 371)
					(55, 367)
					(55, 368)
					(55, 369)
					(55, 370)
					(56, 366)
					(56, 367)
					(56, 368)
					(56, 369)
					(57, 365)
					(57, 366)
					(57, 367)
					(57, 368)
					(58, 364)
					(58, 365)
					(58, 366)
					(59, 363)
					(59, 364)
					(59, 365)
					(60, 362)
					(60, 363)
					(60, 364)
					(61, 361)
					(61, 362)
					(62, 360)
					(62, 361)
					(63, 359)
					(63, 360)
					(64, 358)
					(64, 359)
					(65, 357)
					(65, 358)
					(66, 356)
					(66, 357)
					(67, 355)
					(67, 356)
					(68, 354)
					(68, 355)
					(69, 353)
					(69, 354)
					(70, 352)
					(70, 353)
					(71, 351)
					(71, 352)
					(72, 350)
					(72, 351)
					(73, 349)
					(73, 350)
					(74, 348)
					(74, 349)
					(75, 347)
					(75, 348)
					(76, 346)
					(77, 345)
					(78, 344)
					(79, 343)
					(80, 342)
					(81, 341)
					(82, 340)
					(83, 339)
					(84, 338)
					(85, 337)
					(86, 336)
					(87, 335)
					(88, 334)
					(89, 333)
					(90, 332)
					(91, 331)
					(92, 330)
					(93, 329)
					(94, 328)
					(95, 327)
					(96, 326)
					(97, 325)
					(98, 324)
					(99, 323)
					(100, 322)
					(101, 321)
					(102, 320)
					(103, 319)
					(104, 318)
					(105, 317)
					(106, 316)
					(107, 315)
					(108, 314)
					(109, 313)
					(110, 312)
					(111, 311)
					(112, 310)
					(113, 309)
					(114, 308)
					(115, 307)
					(116, 306)
					(117, 305)
					(118, 304)
					(119, 303)
					(120, 302)
					(121, 301)
					(122, 300)
					(123, 299)
					(124, 298)
					(125, 297)
					(126, 296)
					(127, 295)
					(128, 294)
					(129, 293)
					(130, 292)
					(131, 291)
					(132, 290)
					(133, 289)
					(134, 288)
					(135, 287)
					(136, 286)
					(137, 285)
					(138, 284)
					(139, 283)
					(140, 282)
					(141, 281)
					(142, 280)
					(143, 279)
					(144, 278)
					(145, 277)
					(146, 276)
					(147, 275)
					(148, 274)
					(149, 273)
					(150, 272)
					(151, 271)
					(152, 270)
					(153, 269)
					(154, 268)
					(155, 267)
					(156, 266)
					(157, 265)
					(158, 264)
					(159, 263)
					(160, 262)
					(161, 261)
					(162, 260)
					(163, 259)
					(164, 258)
					(165, 257)
					(166, 256)
					(167, 255)
					(168, 254)
					(169, 253)
					(170, 252)
					(171, 251)
					(172, 250)
					(173, 249)
					(174, 248)
					(175, 247)
					(176, 246)
					(177, 245)
					(178, 244)
					(179, 243)
					(180, 242)
					(181, 241)
					(182, 240)
					(183, 239)
					(184, 238)
					(185, 237)
					(186, 236)
					(187, 235)
					(188, 234)
					(189, 233)
					(190, 232)
					(191, 231)
					(192, 230)
					(193, 229)
					(194, 228)
					(195, 227)
					(196, 226)
					(197, 225)
					(198, 224)
					(199, 223)
					(200, 222)
					(201, 221)
					(202, 220)
					(203, 219)
					(204, 218)
					(205, 217)
					(206, 216)
					(207, 215)
					(208, 214)
					(209, 213)
					(210, 212)
					(211, 211)
					(212, 210)
					(213, 209)
					(214, 208)
					(215, 207)
					(216, 206)
					(217, 205)
					(218, 204)
					(219, 203)
					(220, 202)
					(221, 201)
					(222, 200)
					(223, 199)
					(224, 198)
					(225, 197)
					(226, 196)
					(227, 195)
					(228, 194)
					(229, 193)
					(230, 192)
					(231, 191)
					(232, 190)
					(233, 189)
					(234, 188)
					(235, 187)
					(236, 186)
					(237, 185)
					(238, 184)
					(239, 183)
					(240, 182)
					(241, 181)
					(242, 180)
					(243, 179)
					(244, 178)
					(245, 177)
					(246, 176)
					(247, 175)
					(248, 174)
					(249, 173)
					(250, 172)
					(251, 171)
					(252, 170)
					(253, 169)
					(254, 168)
					(255, 167)
					(256, 166)
					(257, 165)
					(258, 164)
					(259, 163)
					(260, 162)
					(261, 161)
					(262, 160)
					(263, 159)
					(264, 158)
					(265, 157)
					(266, 156)
					(267, 155)
					(268, 154)
					(269, 153)
					(270, 152)
					(271, 151)
					(272, 150)
					(273, 149)
					(274, 148)
					(275, 147)
					(276, 146)
					(277, 145)
					(278, 144)
					(279, 143)
					(280, 142)
					(281, 141)
					(282, 140)
					(283, 139)
					(284, 138)
					(285, 137)
					(286, 136)
					(287, 135)
					(288, 134)
					(289, 133)
					(290, 132)
					(291, 131)
					(292, 130)
					(293, 129)
					(294, 128)
					(295, 127)
					(296, 126)
					(297, 125)
					(298, 124)
					(299, 123)
					(300, 122)
					(301, 121)
					(302, 120)
					(303, 119)
					(304, 118)
					(305, 117)
					(306, 116)
					(307, 115)
					(308, 114)
					(309, 113)
					(310, 112)
					(311, 111)
					(312, 110)
					(313, 109)
					(314, 108)
					(315, 107)
					(316, 106)
					(317, 105)
					(318, 104)
					(319, 103)
					(320, 102)
					(321, 101)
					(322, 100)
					(323, 99)
					(324, 98)
					(325, 97)
					(326, 96)
					(327, 95)
					(328, 94)
					(329, 93)
					(330, 92)
					(331, 91)
					(332, 90)
					(333, 89)
					(334, 88)
					(335, 87)
					(336, 86)
					(337, 85)
					(338, 84)
					(339, 83)
					(340, 82)
					(341, 81)
					(342, 80)
					(343, 79)
					(344, 78)
					(345, 77)
					(346, 76)
					(347, 75)
					(348, 74)
					(349, 73)
					(350, 72)
					(351, 71)
					(352, 70)
					(353, 69)
					(354, 68)
					(355, 67)
					(356, 66)
					(357, 65)
					(358, 64)
					(359, 63)
					(360, 62)
					(361, 61)
					(362, 60)
					(363, 59)
					(364, 58)
					(365, 57)
					(366, 56)
					(367, 55)
					(368, 54)
					(369, 53)
					(370, 52)
					(371, 51)
					(372, 50)
					(373, 49)
					(374, 48)
					(375, 47)
					(376, 46)
					(377, 45)
					(378, 44)
					(379, 43)
					(380, 42)
					(381, 41)
					(382, 40)
					(383, 39)
					(384, 38)
					(385, 37)
					(386, 36)
					(387, 35)
					(388, 34)
					(389, 33)
					(390, 32)
					(391, 31)
					(392, 30)
					(393, 29)
					(394, 28)
					(395, 27)
					(396, 26)
					(397, 25)
					(398, 24)
					(399, 23)
					(400, 22)
					(401, 21)
					(402, 20)
					(403, 19)
					(404, 18)
					(405, 17)
					(406, 16)
					(407, 15)
					(408, 14)
					(409, 13)
					(410, 12)
					(411, 11)
					(412, 10)
					(413, 9)
					(414, 8)
					(415, 7)
					(416, 6)
					(417, 5)
					(418, 4)
					(419, 3)
					(420, 2)
					(421, 1)
					(422, 0)
					(423, 0)
					(424, 0)
					(425, 0)
					(426, 0)
					(427, 0)
					(428, 0)
					(429, 0)
					(430, 0)
					(431, 0)
					(432, 0)
					(433, 0)
					(434, 0)
					(435, 0)
					(436, 0)
					(437, 0)
					(438, 0)
					(439, 0)
					(440, 0)
					(441, 0)
					(442, 0)
					(443, 0)
					(444, 0)
					(445, 0)
					(446, 0)
					(447, 0)
					(448, 0)
					(449, 0)
					(450, 0)
					(451, 0)
					(452, 0)
					(453, 0)
					(454, 0)
					(455, 0)
					(456, 0)
					(457, 0)
					(458, 0)
					(459, 0)
					(460, 0)
					(461, 0)
					(462, 0)
					(463, 0)
					(464, 0)
					(465, 0)
					(466, 0)
					(467, 0)
					(468, 0)
					(469, 0)
					(470, 0)
					(471, 0)
					(472, 0)
					(473, 0)
					(474, 0)
					(475, 0)
					(476, 0)
					(477, 0)
					(478, 0)
					(479, 0)
					(480, 0)
					(481, 0)
					(482, 0)
					(483, 0)
					(484, 0)
					(485, 0)
					(486, 0)
					(487, 0)
					(488, 0)
					(489, 0)
					(490, 0)
					(491, 0)
					(492, 0)
					(493, 0)
					(494, 0)
					(495, 0)
					(496, 0)
				};
				
				\addplot[blue, mark options={
					fill=blue,scale=0.08},mark=o] coordinates {
					(493, 1)
					(491, 2)
					(488, 3)
					(486, 4)
					(483, 5)
					(481, 6)
					(479, 7)
					(477, 8)
					(475, 9)
					(472, 10)
					(470, 11)
					(468, 12)
					(466, 13)
					(464, 14)
					(462, 15)
					(460, 16)
					(458, 17)
					(456, 18)
					(454, 19)
					(453, 20)
					(451, 21)
					(449, 22)
					(447, 23)
					(445, 24)
					(443, 25)
					(441, 26)
					(439, 27)
					(438, 28)
					(436, 29)
					(434, 30)
					(432, 31)
					(431, 32)
					(429, 33)
					(427, 34)
					(425, 35)
					(424, 36)
					(422, 37)
					(420, 38)
					(418, 39)
					(417, 40)
					(415, 41)
					(413, 42)
					(412, 43)
					(410, 44)
					(408, 45)
					(407, 46)
					(405, 47)
					(403, 48)
					(402, 49)
					(400, 50)
					(399, 51)
					(397, 52)
					(395, 53)
					(394, 54)
					(392, 55)
					(391, 56)
					(389, 57)
					(387, 58)
					(386, 59)
					(384, 60)
					(383, 61)
					(381, 62)
					(380, 63)
					(378, 64)
					(377, 65)
					(375, 66)
					(373, 67)
					(372, 68)
					(370, 69)
					(369, 70)
					(367, 71)
					(366, 72)
					(364, 73)
					(363, 74)
					(361, 75)
					(360, 76)
					(358, 77)
					(357, 78)
					(355, 79)
					(354, 80)
					(353, 81)
					(351, 82)
					(350, 83)
					(348, 84)
					(347, 85)
					(345, 86)
					(344, 87)
					(342, 88)
					(341, 89)
					(340, 90)
					(338, 91)
					(337, 92)
					(335, 93)
					(334, 94)
					(332, 95)
					(331, 96)
					(330, 97)
					(328, 98)
					(327, 99)
					(325, 100)
					(324, 101)
					(323, 102)
					(321, 103)
					(320, 104)
					(319, 105)
					(317, 106)
					(316, 107)
					(314, 108)
					(313, 109)
					(312, 110)
					(310, 111)
					(309, 112)
					(308, 113)
					(306, 114)
					(305, 115)
					(304, 116)
					(302, 117)
					(301, 118)
					(300, 119)
					(298, 120)
					(297, 121)
					(296, 122)
					(294, 123)
					(293, 124)
					(292, 125)
					(291, 126)
					(289, 127)
					(288, 128)
					(287, 129)
					(285, 130)
					(284, 131)
					(283, 132)
					(282, 133)
					(280, 134)
					(279, 135)
					(278, 136)
					(276, 137)
					(275, 138)
					(274, 139)
					(273, 140)
					(271, 141)
					(270, 142)
					(269, 143)
					(268, 144)
					(266, 145)
					(265, 146)
					(264, 147)
					(263, 148)
					(261, 149)
					(260, 150)
					(259, 151)
					(258, 152)
					(256, 153)
					(255, 154)
					(254, 155)
					(253, 156)
					(252, 157)
					(250, 158)
					(249, 159)
					(248, 160)
					(247, 161)
					(246, 162)
					(244, 163)
					(243, 164)
					(242, 165)
					(241, 166)
					(240, 167)
					(238, 168)
					(237, 169)
					(236, 170)
					(235, 171)
					(234, 172)
					(233, 173)
					(231, 174)
					(230, 175)
					(229, 176)
					(228, 177)
					(227, 178)
					(226, 179)
					(224, 180)
					(223, 181)
					(222, 182)
					(221, 183)
					(220, 184)
					(219, 185)
					(218, 186)
					(216, 187)
					(215, 188)
					(214, 189)
					(213, 190)
					(212, 191)
					(211, 192)
					(210, 193)
					(208, 194)
					(207, 195)
					(206, 196)
					(205, 197)
					(204, 198)
					(203, 199)
					(202, 200)
					(201, 201)
					(200, 202)
					(199, 203)
					(197, 204)
					(196, 205)
					(195, 206)
					(194, 207)
					(193, 208)
					(192, 209)
					(191, 210)
					(190, 211)
					(189, 212)
					(188, 213)
					(187, 214)
					(186, 215)
					(184, 216)
					(183, 217)
					(182, 218)
					(181, 219)
					(180, 220)
					(179, 221)
					(178, 222)
					(177, 223)
					(176, 224)
					(175, 225)
					(174, 226)
					(173, 227)
					(172, 228)
					(171, 229)
					(170, 230)
					(169, 231)
					(168, 232)
					(167, 233)
					(166, 234)
					(165, 235)
					(164, 236)
					(163, 237)
					(162, 238)
					(161, 239)
					(160, 240)
					(159, 241)
					(158, 242)
					(157, 243)
					(156, 244)
					(155, 245)
					(154, 246)
					(153, 247)
					(152, 248)
					(151, 249)
					(150, 250)
					(149, 251)
					(148, 252)
					(147, 253)
					(146, 254)
					(145, 255)
					(144, 256)
					(143, 257)
					(142, 258)
					(141, 259)
					(140, 260)
					(139, 261)
					(138, 262)
					(137, 263)
					(136, 264)
					(135, 265)
					(134, 266)
					(133, 267)
					(132, 268)
					(131, 269)
					(130, 270)
					(129, 271)
					(128, 272)
					(128, 273)
					(127, 274)
					(126, 275)
					(125, 276)
					(124, 277)
					(123, 278)
					(122, 279)
					(121, 280)
					(120, 281)
					(119, 282)
					(118, 283)
					(117, 284)
					(117, 285)
					(116, 286)
					(115, 287)
					(114, 288)
					(113, 289)
					(112, 290)
					(111, 291)
					(110, 292)
					(109, 293)
					(108, 294)
					(108, 295)
					(107, 296)
					(106, 297)
					(105, 298)
					(104, 299)
					(103, 300)
					(102, 301)
					(101, 302)
					(101, 303)
					(100, 304)
					(99, 305)
					(98, 306)
					(97, 307)
					(96, 308)
					(96, 309)
					(95, 310)
					(94, 311)
					(93, 312)
					(92, 313)
					(91, 314)
					(90, 315)
					(90, 316)
					(89, 317)
					(88, 318)
					(87, 319)
					(86, 320)
					(86, 321)
					(85, 322)
					(84, 323)
					(83, 324)
					(82, 325)
					(81, 326)
					(81, 327)
					(80, 328)
					(79, 329)
					(78, 330)
					(78, 331)
					(77, 332)
					(76, 333)
					(75, 334)
					(74, 335)
					(74, 336)
					(73, 337)
					(72, 338)
					(71, 339)
					(71, 340)
					(70, 341)
					(69, 342)
					(68, 343)
					(67, 344)
					(67, 345)
					(66, 346)
					(65, 347)
					(64, 348)
					(64, 349)
					(63, 350)
					(62, 351)
					(62, 352)
					(61, 353)
					(60, 354)
					(59, 355)
					(59, 356)
					(58, 357)
					(57, 358)
					(56, 359)
					(56, 360)
					(55, 361)
					(54, 362)
					(54, 363)
					(53, 364)
					(52, 365)
					(52, 366)
					(51, 367)
					(50, 368)
					(49, 369)
					(49, 370)
					(48, 371)
					(47, 372)
					(47, 373)
					(46, 374)
					(45, 375)
					(45, 376)
					(44, 377)
					(43, 378)
					(43, 379)
					(42, 380)
					(41, 381)
					(41, 382)
					(40, 383)
					(40, 384)
					(39, 385)
					(38, 386)
					(38, 387)
					(37, 388)
					(36, 389)
					(36, 390)
					(35, 391)
					(35, 392)
					(34, 393)
					(33, 394)
					(33, 395)
					(32, 396)
					(32, 397)
					(31, 398)
					(30, 399)
					(30, 400)
					(29, 401)
					(29, 402)
					(28, 403)
					(28, 404)
					(27, 405)
					(26, 406)
					(26, 407)
					(25, 408)
					(25, 409)
					(24, 410)
					(24, 411)
					(23, 412)
					(23, 413)
					(22, 414)
					(22, 415)
					(21, 416)
					(21, 417)
					(20, 418)
					(20, 419)
					(19, 420)
					(19, 421)
					(18, 422)
					(18, 423)
					(17, 424)
					(17, 425)
					(16, 426)
					(16, 427)
					(15, 428)
					(15, 429)
					(14, 430)
					(14, 431)
					(13, 432)
					(13, 433)
					(13, 434)
					(12, 435)
					(12, 436)
					(11, 437)
					(11, 438)
					(10, 439)
					(10, 440)
					(10, 441)
					(9, 442)
					(9, 443)
					(9, 444)
					(8, 445)
					(8, 446)
					(7, 447)
					(7, 448)
					(7, 449)
					(6, 450)
					(6, 451)
					(6, 452)
					(5, 453)
					(5, 454)
					(5, 455)
					(5, 456)
					(4, 457)
					(4, 458)
					(4, 459)
					(3, 460)
					(3, 461)
					(3, 462)
					(3, 463)
					(3, 464)
					(2, 465)
					(2, 466)
					(2, 467)
					(2, 468)
					(2, 469)
					(1, 470)
					(1, 471)
					(1, 472)
					(1, 473)
					(1, 474)
					(1, 475)
					(1, 476)
					(1, 477)
					(1, 478)
					(1, 479)
					(1, 480)
				};

				\end{axis}
				\end{tikzpicture}
				\protect\caption{Bound for $\mathbb{F}_{q^c} = \mathbb{F}_{32} $ }
					\label{fig:GV_bound}
			\end{figure}
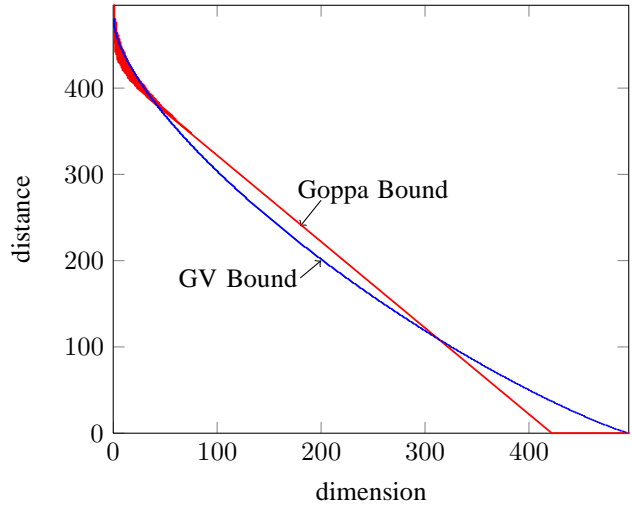

\end{example}

\ifCLASSOPTIONcaptionsoff
  \newpage
\fi



%
\bibliographystyle{IEEEtran}
\bibliography{paper}

%






\end{document}